\newcolumntype{C}{>{\centering\arraybackslash$}p{\linewidth}<{$}}
\theoremstyle{plain}
\newtheorem{theorem}{Theorem}
\newtheorem{proposition}{Proposition}
\begin{document}
	\title{Green Interference Based Symbiotic Security in Integrated Satellite-terrestrial Communications}
	\author{Zhisheng~Yin,
		~\IEEEmembership{Member~IEEE},
		Nan~Cheng,~\IEEEmembership{Member~IEEE},
Tom H.~Luan,~\IEEEmembership{Senior Member~IEEE},
Yilong~Hui,~\IEEEmembership{Member~IEEE},
and Wei~Wang,~\IEEEmembership{Member~IEEE}
		\thanks{
%			\IEEEauthorrefmark{1}These authors contributed to the work equllly and should be regarded as co-first authors.
		Z. Yin and Tom H. Luan are with State Key Lab. of ISN and School of Cyber Engineering, Xidian University, Xi'an, 710071, China (e-mail: zsyin@xidian.edu.cn; tom.luan@xidian.edu.cn).
		
			N. Cheng and Y. Hui are with State Key Lab. of ISN and School of Telecommunications Engineering, Xidian University, Xi'an, 710071, China (e-mail: dr.nan.cheng@ieee.org; ylhui@xidian.edu.cn).
			
			W. Wang is with	the College of Electronic Information Engineering,
			Nanjing University of Aeronautics and Astronautics, Nanjing, 211106, China (e-mail: wei\_wang@nuaa.edu.cn). 
				
			Corresponding author: dr.nan.cheng@ieee.org and tom.luan@xidian.edu.cn.
			
			This work was supported in part by the Fundamental Research Funds for the Central Universities of Ministry of Education of China under Grant XJS221501, the National Natural Science Foundation of Shaanxi Province under Grant 2022JQ-602, National Natural Science Foundation of China (No. 62071356 and 6210010668), Natural Science Foundation of Jiangsu Province  (No. BK20200440), and Natural Sciences and Engineering Research Council (NSERC) of Canada.		
	}}%
	\maketitle
	
	\IEEEpeerreviewmaketitle
	\begin{abstract}
In this paper, we investigate secure transmissions in integrated satellite-terrestrial communications and the green interference based symbiotic security scheme is proposed. Particularly, the co-channel interference induced by the spectrum sharing between satellite and terrestrial networks and the inter-beam interference due to frequency reuse among satellite multi-beam serve as the green interference to assist the symbiotic secure transmission, where the secure transmissions of both satellite and terrestrial links are guaranteed simultaneously. Specifically, to realize the symbiotic security, we formulate a problem to maximize the sum secrecy rate of satellite users by cooperatively beamforming optimizing and a constraint of secrecy rate of each terrestrial user is guaranteed. Since the formulated problem is non-convex and intractable, the Taylor expansion and semi-definite relaxation (SDR) are adopted to further reformulate this problem, and the successive convex approximation (SCA) algorithm is designed to solve it. Finally, the tightness of the relaxation is proved. In addition, numerical results verify the efficiency of our proposed approach. 
         
\end{abstract}
	\begin{IEEEkeywords}
		Satellite-terrestrial, Green interference, Symbiotic security, Secrecy rate.
	\end{IEEEkeywords}
\section{Introduction}
%\subsection{Motivation}
In the emerging 6G, integrated satellite-terrestrial networks can provide a scenario of Internet access with expanded coverage and seamless connectivity\cite{You2021,Tataria2021}. The use cases of integrated satellite-terrestrial networks have attracted a lot of attention in applications such as Internet-of-things (IoT), vehicle-to-everything (V2X), holographic communications, etc\cite{Lyu2021,Lin2021b,Liu2021,Hui2021}. With the increasing number of connective devices and abundant service requirements, the spectrum resource of space information networks is increasingly scarce\cite{Xu2021}. To address this issue, the spectrum sharing within satellite-terrestrial networks and frequency reuse among multiple beams of satellite are widely adopted to improve the spectrum utilization\cite{Zhang2019}. However, the co-channel/inter-beam interference caused by the spectrum sharing and frequency reuse degrades the overall communication performance\cite{Yang2017,An2019,Lin2022}, and the security vulnerability is opportunistic to penetrate crossing satellite and terrestrial networks\cite{Bankey2019,Yin2019}. Particularly, due to the openness of wireless channels and the broad broadcasting coverage, the integrated satellite-terrestrial communications are vulnerable to eavesdropping threats and the very large and complex geographical areas give harbour to attackers and eavesdroppers (Eves), which results in serious security issues.	       

To combat the eavesdropping threats, physical layer security has been widely investigated in conventional wireless communications and also actively explored in satellite and aerial networks recently. As a supplement of cryptography protocol in the upper layer, physical layer security approaches are aimed to ensure secrecy capacity by exploiting the randomness difference of wireless channels\cite{Zhang2021}. Particularly, several related works of physical layer security in satellite-terrestrial communications have been reported. To achieve the secure transmissions of satellite communications, additional communication facilities generally work as an assistance to enhance the main channel capacity or degrade the eavesdropping channel capacity. By exploiting terrestrial base stations (BSs) to serve as cooperative relays of satellite communications, an opportunistic user-relay selection criteria is proposed to improve the secrecy performance in hybrid satellite-terrestrial relay networks \cite{Bankey2019}. 
An unmanned aerial vehicle (UAV) relay and an aerial Eve are particularly considered in hybrid satellite-terrestrial networks, a 3D mobile UAV relaying strategy is designed and the secrecy capacity of satellite-terrestrial communications is analyzed in \cite{Sharma2020}. A multi-user cooperation scheme is proposed to improve the secrecy rate of satellite communications, where the inter-user-interference serves as the green interference to degrade the Eves\cite{Yin2019}. In addition, the mutual interference between satellite and terrestrial networks generally degrades the system performance \cite{Lin2021b}, 
however the green interference from terrestrial network can be designed by beamforming (BF) optimization to minimize the total transmission power while guaranteeing the secrecy rate constraint of satellite users and the common quality of service of terrestrial users\cite{Lin2018}. However, aforementioned related works in satellite-terrestrial networks only consider the security of satellite or terrestrial link independently and generally external resources are consumed to assist with the concerned secure link.       

\begin{figure}[!h]
	\centering
	\includegraphics[width=0.5\textwidth]{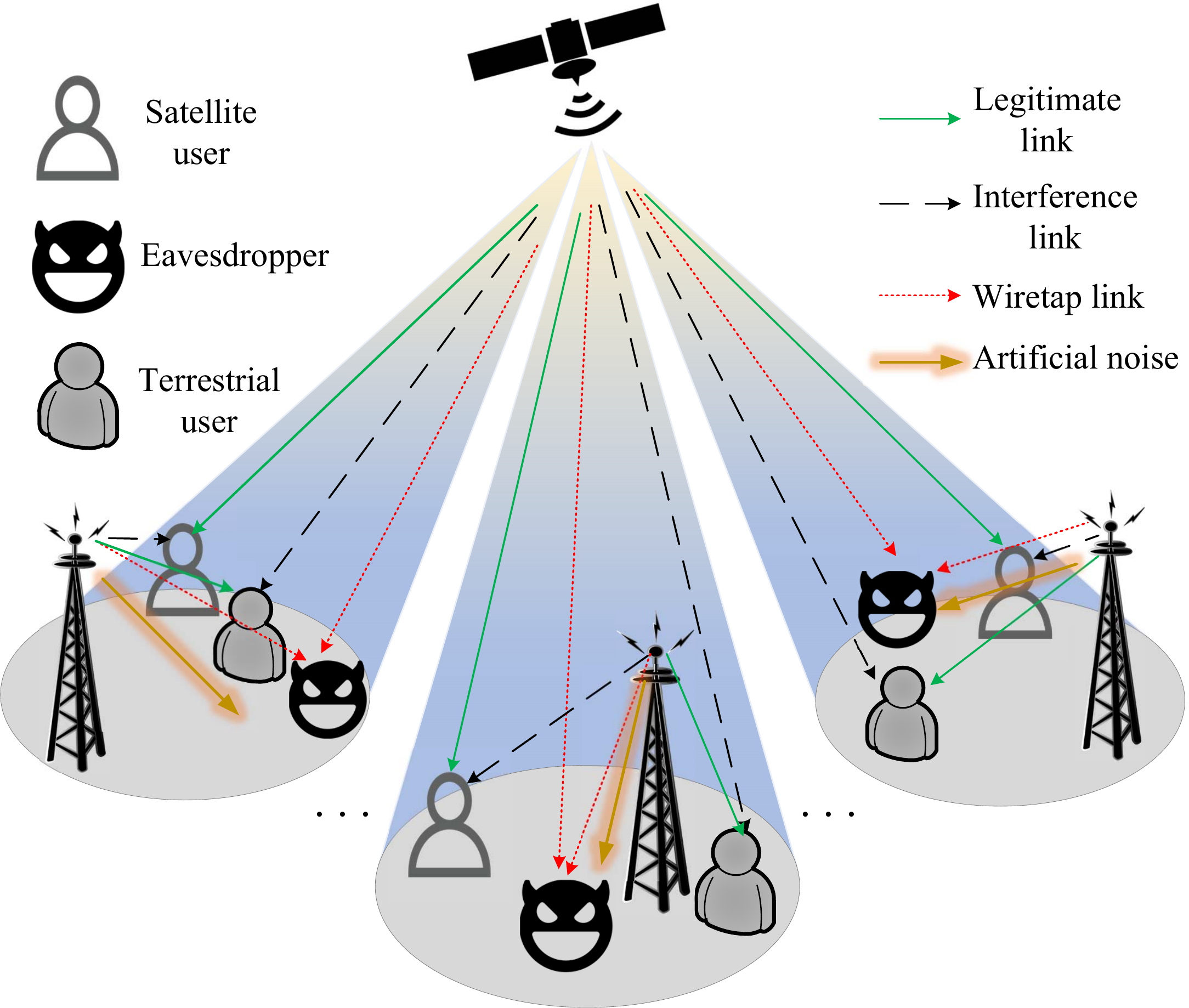}
	\caption{Symbiotic secure transmission model in integrated satellite-terrestrial downlink communications.}
	\label{SM}
\end{figure}
In this paper, we consider the downlink secure transmission in integrated satellite-terrestrial networks as shown in Fig. \ref{SM}, where the multi-beam satellite shares spectrum resource with terrestrial networks and the full frequency reuse is considered among multi-beam. Specifically, a BS coexists in each satellite beam and an Eve is considered to wiretap the satellite and terrestrial link simultaneously. Particularly, the channel similarity between satellite links and limited resource at satellite challenges the implementation of physical layer security in integrated satellite-terrestrial communications. To address this issue and hold the security requirement of terrestrial segment, we conduct a work to guarantee secure transmissions in both satellite and terrestrial communications in this paper and the satellite-terrestrial symbiotic security is realized.     
The main contributions of this paper are as follows.
\begin{itemize}
	\item 
	A framework of satellite-terrestrial symbiotic secure transmission is first proposed, where the green interference from the internal system without additional assistance is considered to implement the symbiotic security, where the co-channel interference caused by the spectrum sharing within satellite-terrestrial networks and the inter-beam interference due to frequency reuse among satellite beams serve as the green interference. 	
	By carefully designing such green interference with BF optimization to confuse the wiretap channel, both satellite and terrestrial links can benefit from each other in achieving the secure transmission. Due to the inherent green interference in integrated satellite-terrestrial communications, the symbiotic security can be realized without consuming external resources.
	\item
	To realize the symbiotic security, we formulate a problem to maximize the sum secrecy rate of satellite users (SUs) in multiple beams by jointly optimizing BFs of satellite and all BSs, and meanwhile a predefined secrecy rate of terrestrial user (TU) in each beam is ensured. To solve this intractable non-convex problem, the Taylor expansion and semi-definite relaxation (SDR) are first adopted to reformulate this problem with constructing convex approximations, and secondly a successive convex approximation (SCA) algorithm is conduct to solve it. 
	\item
	To prove the tightness of relaxation, we introduce an equivalent power minimization problem and by which the rank-one constraints of BF vectors of satellite and BSs are proved. In addition, we develop a reference benchmark which jointly optimizes satellite BF and the power allocation of BSs, where the BS can utilize partial power to create artificial noise (AN) for confusing Eve. Finally, simulations are carried out to verify the efficiency of our proposed approach. 
%	In section III, the cooperative BF optimization scheme is proposed and the SCA algorithm is carried out to solve the primal problem. 
\end{itemize}

The reminder of this work is organized as follows. In Section III, the system model of symbiotic secure transmission in integrated satellite-terrestrial downlink communications is presented and we formulate the problem to maximize the sum secrecy rate of SUs with constrained secrecy rate of TUs. Then a series of reformations and a cooperative BF optimization are conduct to solve this non-convex problem in Section IV. In Section V, for the reference propose, the joint optimization approach of the satellite BF and the power allocation of BSs is given, and simulation results are carried out to evaluate the secrecy performance of our proposed approaches. Finally, this paper is concluded in Section VI.     

\section{Related Works}
By utilizing multi-antenna technology, spatial modulation based physical layer security approaches can be summarized as precoding, friendly jamming, and transmitting selection\cite{Panayirci2020,Yin2022,Yin2022a}. In terrestrial networks, 
abundant channel uncertainties can be utilized to distinguish with the main and wiretap channels, and recently intelligent reflecting surface (IRS) is introduced to enhance the physical layer security recently\cite{Feng2021,Yang2021}. Particularly, secure BF schemes in multiple-input–single-output (MISO) and multiple-input–multiple-output (MIMO) systems are widely investigated \cite{Zhao2020,Li2020,Tian2020}, and artificial noise (AN) aided secure BF approach is also generally adopted to confuse Eves\cite{Yun2020,Sun2019}. In addition, non-orthogonal multiple access (NOMA) based physical layer security approaches have also attracted some attentions\cite{Liu2017,Xiang2020}. 

Whereas for satellite communications, a two-way physical layer security protocol is proposed in optical links and the random nature of Poisson channel is exploited to ensure secret communications\cite{Hayashi2020}, where a nano-satellite is particularly considered as the Eve in the uplink. A threshold-based
user scheduling scheme is proposed to address a case of multiple eavesdroppers\cite{Guo2020}. 
By jamming from satellite and full-duplex receiver, a joint BF and power allocation scheme is proposed to ensure the secrecy outage probability of multi-beam satellite communications\cite{Cui2020}.  
Particularly, a comprehensive survey on physical layer security in satellite communications is conducted in \cite{Li2020a}. However, the inherent difference between satellite and terrestrial communications has not been clearly pointed out when executing physical layer security respectively. Few works highlight the resource limitation at satellite and the distinctive channel characteristic of satellite channels, i.e., channel similarity\cite{Tropea2021,Yin2022,Yin2022a,PerezNeira2019}. 

Considering a coexistence of satellite and terrestrial networks, current investigations show that either the secure transmission of satellite communication or terrestrial communication is considered. To guarantee secure satellite communications, the signal-to-interference-plus-noise (SINR) of Eve for wiretapping satellite signals is decreased by terrestrial BS's cooperative BF, and the SINR of legitimate satellite user is increased\cite{Du2018}, where the quality of service (QoS) of terrestrial link is guaranteed. The BS provides interference resource to assist secure satellite link in cognitive satellite-terrestrial communications\cite{Yan2020}. A secrecy-energy efficiency is considered while constrains the cellular users' rate requirements in a cognitive satellite-terrestrial network\cite{Lin2021}. 
In addition, by exploiting the BS and a cooperative terminal as green interference resources, the Eve within the signal beampattern region of BS is degraded, where the interference level of satellite link is constrained\cite{Lin2021a}. However, the work considering secure transmissions of both satellite and terrestrial communications has not been investigated, which motivates we conduct an investigation on physical layer security in integrated satellite-terrestrial communications. 

\begin{table}[t]
	\centering
	\caption{Summary of Main Notations and Definitions}
	\label{tab11}
	\begin{tabular}{lll}
		\toprule
		\midrule
		Notation  & Definition \\
		\midrule
		$N$ & number of satellite beams\\
		$M$ & number of BS transmit antennas\\
		${\rm{S}}{{\rm{U}}_{k}}$	&satellite user in the $k^{th}$ beam \\
		${\rm{T}}{{\rm{U}}_{k}}$	&terrestrial user in the $k^{th}$ beam \\
				$ {\bf{v}} \in \mathbb{C}^{(M-1)\times 1}$ & AN vector\\
		$ {\bf{h}}_{su} \in {\mathbb{C}^{N \times 1}}$ & channel vector from satellite to SU\\
		${\bf{h}}_{tu} \in {\mathbb{C}^{N \times 1}}$ & channel vector from satellite to TU\\
		${\bf{h}}_{e} \in {\mathbb{C}^{N \times 1}}$ & channel vector from satellite to Eve\\
		$ {\bf{g}}_{su} \in \mathbb{C}^{M\times 1}$  & channel vector from BS to SU\\
		${\bf{g}}_{tu} \in \mathbb{C}^{M\times 1}$& channel vector from BS to TU\\
		${\bf{g}}_{e} \in \mathbb{C}^{M\times 1}$& channel vector from BS to Eve\\
		$ {{\bf{w}}_k} \in {\mathbb{C}^{N \times 1}} $ & satellite BF vector for the $k^{th}$ beam\\
		$ {\bf{f}}_k \in \mathbb{C}^{M\times 1}$ & BS BF vector in the $k^{th}$ beam\\
		$ \Delta$ & norm-bounded channel estimate error\\
	${\hat h_e}$ & eavesdropping channel estimation by the BS\\
 ${{{\bf{\hat g}}}_e}$ & eavesdropping channel estimation by satellite \\
		\bottomrule
	\end{tabular}
\end{table}

\section{System Model and Problem Formulation}%and Problem Statement 

 We consider the downlink secure transmission in integrated satellite-terrestrial communications shown in Fig. \ref{SM}, where a multi-beam satellite is assumed and a BS with terrestrial network exists in each beam. Full frequency reuse is adopted among satellite beams and the spectrum is shared with the BSs. Particularly, in each beam, we consider a legitimate satellite user (SU) and terrestrial user (TU) coexisting in the common coverage of satellite and the BS. Therefore, when a passive Eve hides in such common area, it could wiretap the SU or TU possibly.
 
 In this section, the channel and signal models of satellite and terrestrial communications are first presented, where the imperfect channel state information (CSI) for the Eve is assumed. Based on the signal models, the received SINRs of SUs, TUs and Eves are given. Further, we conduct a symbiotic security problem that the secure transmissions of both satellite and terrestrial links are guaranteed simultaneously without using additional resources. In addition, the notations are defined in Table I.

\subsection{Channel and Signal Models}
For the satellite-to-ground channel, the free space path loss (FSPL), rain attenuation, and satellite beam gain are generally considered to construct the channel model\cite{Series2015}, which is given by
\begin{equation}\label{key}
{\bf{h}} = \sqrt {{C_L}b\beta } \exp \left( { - j{\bm{\theta }}} \right) ,
\end{equation}
where $ C_L $ denotes the FSPL, $ b $ denotes the beam gain, $ \beta $ denotes the channel gain due to rain attenuation,
and $ \bm{\theta } $ is the phase vector with uniform distribution over $ \left[ {\left. {0,2\pi } \right)} \right.$.
Specifically, 
\begin{equation}\label{key}
	{C_L} = {\left( {{\lambda  \mathord{\left/
					{\vphantom {\lambda  {4\pi }}} \right.
					\kern-\nulldelimiterspace} {4\pi }}} \right)^2}/\left( {{d^2} + {h^2}} \right),
\end{equation}
where $  \lambda$ denotes signal wavelength, $ d $ denotes the distance
from the beam center to the center of satellite coverage, and
$ h $ accounts for the height of satellite. 		
The beam gain is defined by
\begin{equation}\label{bik}
	b = G{\left( {\frac{{{J_1}\left( u_0 \right)}}{{2u_0}} - 36\frac{{{J_3}\left( u_0 \right)}}{{u_{0}^2}}} \right)^2},
\end{equation}
where $ {G} $ denotes the maximum satellite antenna gain,
%\begin{equation}\label{ui}
$ {u_0} = 2.07123\frac{{\sin \left( {{\alpha}} \right)}}{{\sin \left( {\alpha _{3{\rm{dB}}}^{}} \right)}} $
%\end{equation}
with  $ {{\alpha}} $ being the elevation angle between the beam center and ${\rm{S}}{{\rm{U}}}$ and ${\alpha _{3{\rm{dB}}}^{}}$ being the 3 dB angle of satellite beam. Additionally, $ {J_1}\left(  \cdot  \right) $ and $ {J_3}\left(  \cdot  \right) $ are the first-kind Bessel functions of order 1 and 3, respectively. $ \beta $ is modeled as a log-normal random variable, i.e., $ \ln \left( {{\beta _{dB}}} \right) \sim \mathcal{N}\left( {u,{\delta ^2}} \right) $ with $ {{\beta _{dB}}} $ being the dB form of $ \beta $. 
Particularly, $ {{\bf{h}}_{su}} \in \mathbb{C}^{N \times 1}$, ${{\bf{h}}_{tu}}\in \mathbb{C}^{N \times 1}$, ${{\bf{h}}_e}\in \mathbb{C}^{N \times 1}$ are assumed to be the channel vectors from satellite to SU, TU, and Eve, respectively.

Whereas, we adopt the channel model for terrestrial links as $ {\bf{g}} = \sqrt \alpha  {{\bf{g}}_0}$, where  $ \alpha$ denotes the large-scale fading, $ \alpha  = {C_0}{r^{ - 4}} $ with $ {C_0} $ being the channel power gain at the reference
distance of 1 m and $ r $ denoting the distance from BS to the destination, and $ {{\bf{g}}_0} $ denotes the small-scale fading which undergoes Nakagami-$ m $ fading with fading severity $ m $ and average power $ \Omega $.
Particularly, $ {\bf{g}}_{su}\in \mathbb{C}^{M \times 1}$, $ {\bf{g}}_{gu}\in \mathbb{C}^{M \times 1}$, and $ {\bf{g}}_{e}\in \mathbb{C}^{M \times 1}$ denote the channel vectors between BS and SU, TU, and Eve, respectively. 

In addition, the imperfect CSI of Eve is modeled as $ {{\bf{h}}_e} = {{{\bf{\hat h}}}_e} + \Delta {{\bf{h}}_e}$ and $ {{\bf{g}}_e} = {{{\bf{\hat g}}}_e} + \Delta {{\bf{g}}_e} $, where ${\bf{\hat h}}_e$ and ${{{\bf{\hat g}}}_e}$ denote the estimations of eavesdropping channels by satellite and the BS, with $ {\Delta {{\bf{h}}_e}} $ and $ \Delta {{\bf{g}}_e} $ being the norm-bounded estimate errors correspondingly. To facilitate the easy, we assume $ \left\| {\Delta {{\bf{h}}_e}} \right\| = \left\| {\Delta {{\bf{g}}_e}} \right\| \le \Delta  $ in this work.

	\setcounter{equation}{12}
\begin{figure*}[b]
	\hrulefill
	\begin{equation}
		R_s^{su,k} = {\log _2}\left( {\frac{{\sum\limits_i^N {{\rm{Tr}}\left( {{{\bf{H}}_{su,i}}{{\bf{W}}_i}} \right)}  + {\rm{Tr}}\left( {{{\bf{G}}_{su,k}}{{\bf{F}}_k}} \right){\rm{ + }}1}}{{\sum\limits_{i \ne k}^N {{\rm{Tr}}\left( {{{\bf{H}}_{su,i}}{{\bf{W}}_i}} \right)}  + {\rm{Tr}}\left( {{{\bf{G}}_{su,k}}{{\bf{F}}_k}} \right){\rm{ + }}1}}} \right) 
		- {\log _2}\left( {\frac{{\sum\limits_i^N {{\rm{Tr}}\left( {{{\bf{H}}_{e,i}}{{\bf{W}}_i}} \right)}  + {\rm{Tr}}\left( {{{\bf{G}}_{e,k}}{{\bf{F}}_k}} \right){\rm{ + }}1}}{{\sum\limits_{i \ne k}^N {{\rm{Tr}}\left( {{{\bf{H}}_{e,i}}{{\bf{W}}_i}} \right)}  + {\rm{Tr}}\left( {{{\bf{G}}_{e,k}}{{\bf{F}}_k}} \right){\rm{ + }}1}}} \right).\label{Eq-10}
	\end{equation}
	\begin{equation}
		R_s^{tu,k} = {\log _2}\left( {\frac{{{\rm{Tr}}\left( {{{\bf{G}}_{tu,k}}{{\bf{F}}_k}} \right) + \sum\limits_{i = 1}^N {{\rm{Tr}}\left( {{{\bf{H}}_{tu,i}}{{\bf{W}}_i}} \right)}  + 1}}{{\sum\limits_{i = 1}^N {{\rm{Tr}}\left( {{{\bf{H}}_{tu,i}}{{\bf{W}}_i}} \right)}  + 1}}} \right) 
		- {\log _2}\left( {\frac{{{\rm{Tr}}\left( {{{\bf{G}}_{e,k}}{{\bf{F}}_k}} \right) + \sum\limits_{i = 1}^N {{\rm{Tr}}\left( {{{\bf{H}}_{e,i}}{{\bf{W}}_i}} \right)} {\rm{ + }}1}}{{\sum\limits_{i = 1}^N {{\rm{Tr}}\left( {{{\bf{H}}_{e,i}}{{\bf{W}}_i}} \right)} {\rm{ + }}1}}} \right).
		\label{Rtuk}
	\end{equation}
	\begin{equation}
		\sum\limits_{k = 1}^N {R_s^{su,k}}  = {\rm{lo}}{{\rm{g}}_2}(\mathop \Pi \limits_{k = 1}^N \frac{{\sum\limits_{i = 1}^N {{\rm{Tr}}\left( {{{\bf{H}}_{su,i}}{{\bf{W}}_i}} \right)}  + {\rm{Tr}}\left( {{{\bf{G}}_{su,k}}{{\bf{F}}_k}} \right){\rm{ + }}1}}{{\sum\limits_{i \ne k}^N {{\rm{Tr}}\left( {{{\bf{H}}_{su,i}}{{\bf{W}}_i}} \right)}  + {\rm{Tr}}\left( {{{\bf{G}}_{su,k}}{{\bf{F}}_k}} \right){\rm{ + }}1}}) 
		- {\rm{lo}}{{\rm{g}}_2}(\mathop \Pi \limits_{k = 1}^N \frac{{\sum\limits_{i = 1}^N {{\rm{Tr}}\left( {{{\bf{H}}_{e,i}}{{\bf{W}}_i}} \right)}  + {\rm{Tr}}\left( {{{\bf{G}}_{e,k}}{{\bf{F}}_k}} \right){\rm{ + }}1}}{{\sum\limits_{i \ne k}^N {{\rm{Tr}}\left( {{{\bf{H}}_{e,i}}{{\bf{W}}_i}} \right)}  + {\rm{Tr}}\left( {{{\bf{G}}_{e,k}}{{\bf{F}}_k}} \right){\rm{ + }}1}}).\label{sumsr}
	\end{equation}
\end{figure*}

%i.e., $ \Delta {h_e} \sim {\mathcal{CN}}\left( {0,\delta _{{h_e}}^2} \right) $ and $ \Delta {{\bf{g}}_e} \sim {\mathcal{CN}}\left( {0,\delta _{{{\bf{g}}_e}}^2} \right) $. 

%\emph{Notations:} $ {x_{su}}$ and ${x_{gu}} $ stand for the  transmission signals of satellite and BS, respectively. 
%  In addition,  $ rank\left(  \cdot  \right) $ denotes the rank of a matrix.

Assuming that $ x_{su,i} $ and $ x_{tu,k} $ denote the expected signal of SU and TU respectively. Without loss of generality, the signal received by $ \text{SU}_k $ in the $k^{th}$ beam can be represented as
	\setcounter{equation}{3}
\begin{equation}\label{sig_su}
{y_{su,k}} = {\bf{h}}_{su,k}^\dag \sum\limits_i^N {{{\bf{w}}_i}{x_{su,i}}}  + {\bf{g}}_{su,k}^\dag {{\bf{f}}_k}{x_{tu,k}} + {n_{su,k}},
\end{equation}
where $ {\bf{h}}_{su,k} $ denotes the channel vector from satellite to $ \text{SU}_k $, $ {\bf{g}}_{su,k} $ denotes the channel vector from the BS to $ \text{TU}_k $, $ {{\bf{w}}_k} \in \mathbb{C}^{N\times 1}$ and $ {{\bf{f}}_k} \in \mathbb{C}^{M\times 1}$ denote the BF vectors of satellite and BS in the $k^{th}$ beam, $ {n_{su,k}} $ is the noise received by $ \text{SU}_k $. 
The signal received by $ \text{TU}_k $ can be written as 
\begin{equation}\label{sig_gu}
{y_{tu,k}} = {\bf{g}}_{tu,k}^\dag {{\bf{f}}_k}{x_{tu,k}} + {\bf{h}}_{tu,k}^\dag \sum\limits_i^N {{{\bf{w}}_i}{x_{su,i}}}  + {n_{tu,k}},
\end{equation}
where $ {{\bf{f}}_k} \in \mathbb{C}^{M\times 1}$ denotes the BF at BS.

Besides, the received signal by the Eve in the $k^{th}$ beam is given by 
	\setcounter{equation}{5}
\begin{equation}\label{sig_se}
{y_{e,k}} = {\bf{h}}_{e,k}^\dag \sum\limits_{i = 1}^N {{{\bf{w}}_i}{x_{su,i}}}  + {\bf{g}}_{e,k}^\dag {{\bf{f}}_k}{x_{tu,k}} + {n_{e,k}}.
\end{equation}

\subsection{Problem Formulation}
%From (\ref{sig_su}) and (\ref{sig_gu}), it can be seen that SU and TU receive an interference from the BS and satellite, respectively.
From (\ref{sig_se}), it can be see that the Eve wiretaps signals of SU and TU.
However, (\ref{sig_su}) and (\ref{sig_gu}) indicate that the SU and TU receive interference from BS and satellite respectively, which can be designed to unequally degrade the legitimate users and Eve in each beam. Since such interference can be regarded as the green interference to assist the implementation of physical layer security, the secrecy rate of SU and TU can be obtained as follows. 

Based on (\ref{sig_su}) and (\ref{sig_gu}), the SINR of $ \text{SU}_k $ can be calculated as
\begin{equation}\label{SINR_satellite}
{\gamma _{su,k}}{\rm{ = }}\frac{{{{\left\| {{\bf{h}}_{su,k}^\dag {{\bf{w}}_k}} \right\|}^{\rm{2}}}}}{{\sum\limits_{i \ne k}^N {{{\left\| {{\bf{h}}_{su,i}^\dag {{\bf{w}}_i}} \right\|}^{\rm{2}}}}  + {{\left\| {{\bf{g}}_{su,k}^\dag {{\bf{f}}_k}} \right\|}^{\rm{2}}}{\rm{ + }}\delta _{su,k}^2}},
\end{equation}
and the SINR of $ \text{TU}_k $ is obtained as
\begin{equation}\label{SINR_gu}
{\gamma _{tu,k}}{\rm{ = }}\frac{{{{\left\| {{\bf{g}}_{tu,k}^\dag {{\bf{f}}_k}} \right\|}^{\rm{2}}}}}{{\sum\limits_{i = 1}^N {{{\left\| {{\bf{h}}_{tu,i}^\dag {{\bf{w}}_i}} \right\|}^{\rm{2}}}}  + \delta _{tu,k}^2}},
\end{equation}
where $\delta _{su,k}^2$ and $\delta _{tu,k}^2  $ denote the noise power received by $ \text{SU}_k $ and $ \text{TU}_k $.

Accordingly, the SINR of Eve for wiretapping $ \text{SU}_k $ can be written as
\begin{equation}\label{SINR_Eve1}
{\gamma _{se,k}}{\rm{ = }}\frac{{{{\left\| {{\bf{h}}_{e,k}^\dag {{\bf{w}}_k}} \right\|}^{\rm{2}}}}}{{\sum\limits_{i \ne k}^N {{{\left\| {{\bf{h}}_{e,i}^\dag {{\bf{w}}_i}} \right\|}^{\rm{2}}}}  + {{\left\| {{\bf{g}}_{e,k}^\dag {{\bf{f}}_k}} \right\|}^{\rm{2}}}{\rm{ + }}\delta _{e,k}^2}},
\end{equation}
and the SINR of Eve for wiretapping $ \text{TU}_k $ is given by
\begin{equation}\label{SINR_Eve2}
{\gamma _{te,k}}{\rm{ = }}\frac{{{{\left\| {{\bf{g}}_{e,k}^\dag {{\bf{f}}_k}} \right\|}^{\rm{2}}}}}{{\sum\limits_{i = 1}^N {{{\left\| {{\bf{h}}_{e,i}^\dag {{\bf{w}}_i}} \right\|}^{\rm{2}}}} {\rm{ + }}\delta _{e,k}^2}},
\end{equation}
where $\delta _{e,k}^2$ denotes the noise power received by the Eve.

The secrecy rate of $ \text{SU}_k $ and $ \text{TU}_k $ can be respectively given by 
\begin{equation}\label{SR_su}
R_s^{su,k} = {\left[ {{{\log }_2}\left( {1 + {\gamma _{su,k}}} \right) - {{\log }_2}\left( {1 + {\gamma _{se,k}}} \right)} \right]^ + },
\end{equation}
\begin{equation}\label{SR_gu}
R_s^{tu,k} = {\left[ {{{\log }_2}\left( {1 + {\gamma _{tu,k}}} \right) - {{\log }_2}\left( {1 + {\gamma _{te,k}}} \right)} \right]^ + }.
\end{equation}

Based on (\ref{SINR_satellite}--\ref{SINR_Eve2}), the secrecy rates in (\ref{SR_su}) and  (\ref{SR_gu}) can be further represented in (\ref{Eq-10}) and (\ref{Rtuk}) as shown at the bottom of this page, where $ {{\bf{H}}_{su,i}} = {\bf{h}}_{su,k}^{}{\bf{h}}_{su,k}^\dag  $, $ {{\bf{H}}_{tu,i}} = {\bf{h}}_{tu,k}^{}{\bf{h}}_{tu,k}^\dag $, $  {{\bf{H}}_{e,i}} = {\bf{\hat h}}_{e,i}^{}{\bf{\hat h}}_{e,i}^\dag   $, $ {{\bf{W}}_i} = {{\bf{w}}_i}{\bf{w}}_i^\dag  $, $ {{\bf{G}}_{e,k}} = {\bf{\hat g}}_{e,k}^{}{\bf{\hat g}}_{e,k}^\dag $, $ {{\bf{G}}_{tu,k}} = {\bf{g}}_{tu,k}^{}{\bf{g}}_{tu,k}^\dag  $, $ {{\bf{G}}_{su,k}} = {\bf{g}}_{su,k}^{}{\bf{g}}_{su,k}^\dag  $, $ {{\bf{F}}_k} = {{\bf{f}}_k}{\bf{f}}_k^\dag  $.

From (\ref{SINR_Eve1}) and (\ref{SINR_Eve2}), we can see the SINR of Eve in each beam which targets SU is degraded by the green interference from BS, and meanwhile the SINR of Eve targeting TU is degraded by the green interference from satellite. Further, by using (\ref{Eq-10}), the sum secrecy rate of legitimate SUs in satellite multi-beam is given in (\ref{sumsr}).

To guarantee the secrecy performance of both satellite link and terrestrial link simultaneously, our focus is on optimizing the BFs of multi-beam satellite and BS to maximize the sum secrecy rate of legitimate SUs in satellite multi-beam and satisfy a predefined secrecy constraint of legitimate TU in each beam.

Thus, the problem can be mathematically formulated as 
\begin{align}
{\mathcal{P}1:}\quad& \mathop {\max}\limits_{{\left\{ {{{\bf{w}}_k},{{\bf{f}}_k}} \right\}_{k = 1}^N}} \quad \sum\limits_{k = 1}^N {R_s^{su,k}}  \tag{16a} \label{p1-1}\\
\text{s.t.}\quad
&R_s^{tu,k} \ge {Q}_{tu,k},k= 1,2,...,N \tag{16b}, \label{p1-2}\\
& \sum\limits_{k = 1}^N {{{\left\| {{{\bf{w}}_k}} \right\|}^{\rm{2}}}}  \le {P_S} \tag{16c}, \label{p1-3}\\
& {\left\| {{{\bf{f}}_k}} \right\|^{\rm{2}}} \le {P_B},k= 1,2,...,N \tag{16d}. \label{p1-4}
\end{align}

In $ {\mathcal{P}1} $, (\ref{p1-2}) guarantees the secrecy requirements of TUs within satellite beams, where $ {Q}_{tu,k} $ is a predefined secrecy rate threshold for TU in the $ k^{th} $ beam;
(\ref{p1-3}) and (\ref{p1-4}) represent the power constraints of satellite and BS, respectively.

Obviously, the problem $ {\mathcal{P}1} $ has non-convex objective function and constraints. A series of reformulations are conducted in the following section to convert such intractable problem into a solvable alternative, and a joint satellite-terrestrial BF optimizing approach is proposed to solve this problem. To facilitate the simplification, the noise power is normalized, i.e., $ \delta _{su,k}^2 = \delta _{tu,k}^2 =\delta _{e,k}^2= 1$.

\section{Joint Satellite-terrestrial BF Optimizing}
In this section, we aim to solve the secrecy rate maximization problem by reformulating the objection function and its non-convex constraints simplify this problem as a solvable form.
Particularly, a Taylor expansion and SDR are first adopted to reformulate the problem $ {\mathcal{P}1 }$ and then a successive convex approximation approach is carried out to solve the reformulated problem. Finally, the tightness of the relaxation is proved. 
\subsection{Taylor Expansion and SDR Reformulation}
 We introduce exponential variables to make the following changes.
%\setcounter{equation}{15}
%\begin{figure*}[t]
%\begin{align}\label{sum_rs}
%\sum\limits_{k = 1}^N {R_s^{su,k}}  = {\rm{lo}}{{\rm{g}}_2}(\mathop \Pi \limits_{k = 1}^N \frac{{\sum\limits_{i = 1}^N {{\rm{Tr}}\left( {{{\bf{H}}_{su,i}}{{\bf{W}}_i}} \right)}  + {\rm{Tr}}\left( {{{\bf{G}}_{su,k}}{{\bf{F}}_k}} \right){\rm{ + }}1}}{{\sum\limits_{i \ne k}^N {{\rm{Tr}}\left( {{{\bf{H}}_{su,i}}{{\bf{W}}_i}} \right)}  + {\rm{Tr}}\left( {{{\bf{G}}_{su,k}}{{\bf{F}}_k}} \right){\rm{ + }}1}}) - {\rm{lo}}{{\rm{g}}_2}(\mathop \Pi \limits_{k = 1}^N \frac{{\sum\limits_{i = 1}^N {{\rm{Tr}}\left( {{{\bf{H}}_{e,i}}{{\bf{W}}_i}} \right)}  + {\rm{Tr}}\left( {{{\bf{G}}_{e,k}}{{\bf{F}}_k}} \right){\rm{ + }}1}}{{\sum\limits_{i \ne k}^N {{\rm{Tr}}\left( {{{\bf{H}}_{e,i}}{{\bf{W}}_i}} \right)}  + {\rm{Tr}}\left( {{{\bf{G}}_{e,k}}{{\bf{F}}_k}} \right){\rm{ + }}1}}).
%\end{align}
%\hrulefill
%\end{figure*} 
\newcounter{esk}                        
% 创建临时变量TempEqCnt
\setcounter{esk}{\value{equation}} 
% 将当前公式序号 赋给TempEqCnt
\setcounter{equation}{16}
\begin{align}
{e^{{s_k}}} = \sum\limits_{i = 1}^N {{\rm{Tr}}\left( {{{\bf{H}}_{su,i}}{{\bf{W}}_i}} \right)}  + {\rm{Tr}}\left( {{{\bf{G}}_{su,k}}{{\bf{F}}_k}} \right){\rm{ + }}1,\label{sk}\\
{e^{{\mu _k}}} = \sum\limits_{i \ne k}^N {{\rm{Tr}}\left( {{{\bf{H}}_{su,i}}{{\bf{W}}_i}} \right)}  + {\rm{Tr}}\left( {{{\bf{G}}_{su,k}}{{\bf{F}}_k}} \right){\rm{ + }}1,\label{tk}\\
{e^{{q_k}}} = \sum\limits_{i = 1}^N {{\rm{Tr}}\left( {{{\bf{H}}_{e,i}}{{\bf{W}}_i}} \right)}  + {\rm{Tr}}\left( {{{\bf{G}}_{e,k}}{{\bf{F}}_k}} \right){\rm{ + }}1,\label{qk}\\
{e^{{v_k}}} = \sum\limits_{i \ne k}^N {{\rm{Tr}}\left( {{{\bf{H}}_{e,i}}{{\bf{W}}_i}} \right)}  + {\rm{Tr}}\left( {{{\bf{G}}_{e,k}}{{\bf{F}}_k}} \right){\rm{ + }}1.\label{vk}
\end{align}

By substituting (\ref{sk}--\ref{vk}) into (\ref{p1-1}), the objection function in $ {\mathcal{P}1} $ can be equivalently represented as
	\setcounter{equation}{20}
\begin{equation}\label{Eq-newp}
\mathop {\max }\limits_{\left\{ {{s_k},{\mu _k},{q_k},{v_k}} \right\}_{k = 1}^N} \sum\limits_{k = 1}^N {\left( {{s_k} - {\mu _k} - {q_k} + {v_k}} \right)} , 
\end{equation}
which is a convex problem because the criterion is a sum of affine functions (composed with $ {{s_k} - {\mu _k}} $ and $ {v_k} - {q_k} $). Particularly, the constrains of $ {s_k}, \mu_k, q_k, v_k$ in (\ref{Eq-newp}) hold the following bounds. 
\begin{align}
{e^{{s_k}}} \le \sum\limits_{i = 1}^N {{\rm{Tr}}\left( {{{\bf{H}}_{su,i}}{{\bf{W}}_i}} \right)}  + {\rm{Tr}}\left( {{{\bf{G}}_{su,k}}{{\bf{F}}_k}} \right){\rm{ + }}1,\label{Eq-16}\\
{e^{{\mu _k}}} \ge \sum\limits_{i \ne k}^N {{\rm{Tr}}\left( {{{\bf{H}}_{su,i}}{{\bf{W}}_i}} \right)}  + {\rm{Tr}}\left( {{{\bf{G}}_{su,k}}{{\bf{F}}_k}} \right){\rm{ + }}1,\label{Eq-17}\\
{e^{{q_k}}} \ge \sum\limits_{i = 1}^N {{\rm{Tr}}\left( {{{\bf{H}}_{e,i}}{{\bf{W}}_i}} \right)}  + {\rm{Tr}}\left( {{{\bf{G}}_{e,k}}{{\bf{F}}_k}} \right){\rm{ + }}1,\label{Eq-18}\\
{e^{{v_k}}} \le \sum\limits_{i \ne k}^N {{\rm{Tr}}\left( {{{\bf{H}}_{e,i}}{{\bf{W}}_i}} \right)}  + {\rm{Tr}}\left( {{{\bf{G}}_{e,k}}{{\bf{F}}_k}} \right){\rm{ + }}1.\label{Eq-19}
\end{align}

It can be verified that all the inequalities from (\ref{Eq-16}) to (\ref{Eq-19}) hold with equalities at the optimal points by the monotonicity of objective function. However, it can be observed that the constraints in (\ref{Eq-17}) and (\ref{Eq-18}) are still non-convex.

In addition, keeping (\ref{Rtuk}) in mind, the constraint in (\ref{p1-2}) is also non-convex due to the non-convex fractional programming.
Similarly, we make the following changes.
\begin{equation}
{e^{{\tau _k}}} = \sum\limits_{i=1}^N {{\rm{Tr}}\left( {{{\bf{H}}_{tu,i}}{{\bf{W}}_i}} \right)} {\rm{ + Tr}}\left( {{{\bf{G}}_{tu,k}}{{\bf{F}}_k}} \right) + 1,\label{Eqtao}
\end{equation}
\begin{equation}
{e^{{\eta _k}}}  = \sum\limits_{i = 1}^N {{\rm{Tr}}\left( {{{\bf{H}}_{tu,i}}{{\bf{W}}_i}} \right)}  + 1, \label{Eq23}
\end{equation} 
\begin{equation}
{e^{{\alpha _k}}}  = \sum\limits_{i = 1}^N {{\rm{Tr}}\left( {{{\bf{H}}_{e,i}}{{\bf{W}}_i}} \right)} {\rm{ + }}1. \label{Eqal}
\end{equation} 

%\begin{subequations}
%	\begin{numcases}{}
%		3x+4y=5\\
%		5x-9y=13
%	\end{numcases}
%\end{subequations}
By substituting (\ref{Eqtao}--\ref{Eqal}, \ref{qk}) into (\ref{Rtuk}), the secrecy constraint of TU in (\ref{p1-2}) can be equivalently reformulated as
\begin{equation}\label{Eq24}
{\eta _k} + {q_k} - {\tau _k} - {\alpha _k} \le  - \frac{{{Q_{tu,k}}}}{{{{\log }_2}e}},
\end{equation}
with the successive constraints as follows  
\begin{align}
{e^{{\tau _k}}} &\le \sum\limits_i^N {{\rm{Tr}}\left( {{{\bf{H}}_{tu,i}}{{\bf{W}}_i}} \right)} {\rm{ + Tr}}\left( {{{\bf{G}}_{tu,k}}{{\bf{F}}_k}} \right) + 1, \label{Eq26-6}\\
{e^{{\eta _k}}} &\ge \sum\limits_{i = 1}^N {{\rm{  Tr}}\left( {{{\bf{H}}_{tu,i}}{{\bf{W}}_i}} \right)}  + 1,\label{Eq26-7}\\
{e^{{\alpha _k}}}& \le \sum\limits_{i = 1}^N {{\rm{Tr}}\left( {{{\bf{H}}_{e,i}}{{\bf{W}}_i}} \right)} {\rm{ + }}1,\label{Eq26-8}
\end{align} 
and (\ref{Eq-18}) is also satisfied, where (\ref{Eq26-7}) and (\ref{Eq-18}) are non-convex.

To address the non-convex constraints in (\ref{Eq-17}), and (\ref{Eq-18}), (\ref{Eq26-7}), we adopt Taylor expansion to make these constraints conservatively convex approximating at $\left( {\left\{ {{{\tilde \mu }_k}} \right\},\left\{ {{{\tilde q}_k}} \right\},\left\{ {{{\tilde \eta }_k}} \right\}} \right)$. By using the first-order Taylor expansion of $ {e^{{\mu _k}}},{e^{{q_k}}}$, and ${e^{{\eta _k}}}$, the restrictive approximations for (\ref{Eq-17}), and (\ref{Eq-18}), (\ref{Eq26-7}) can be given as 
\begin{equation}
	\sum\limits_{i \ne k}^N {{\rm{Tr}}\left( {{{\bf{H}}_{su,i}}{{\bf{W}}_i}} \right)}  + {\rm{Tr}}\left( {{{\bf{G}}_{su,k}}{{\bf{F}}_k}} \right){\rm{ + }}1
	\le {e^{{{\tilde \mu }_k}}}\left( {{\mu _k} - {{\tilde \mu }_k} + 1} \right), \label{Eq-27}
\end{equation}
\setcounter{equation}{39}
\begin{figure*}[b]
	\hrulefill
	\begin{equation}\label{SINR_Eve4}
		{\gamma _{se,k}}{\rm{ = }}\frac{{{{\left\| {{\bf{h}}_{e,k}^\dag {{\bf{w}}_k}} \right\|}^{\rm{2}}}}}{{\sum\limits_{i \ne k}^N {{{\left\| {{\bf{h}}_{e,i}^\dag {{\bf{w}}_i}} \right\|}^{\rm{2}}}}  + {\ell _k} {P_B}{{\left\| {{\bf{g}}_{e,k}^\dag {\bf{f}}_k^{mrt}} \right\|}^{\rm{2}}}{\rm{ + }}\left( {1 - {\ell _k} } \right){P_B}{{\left\| {{\bf{g}}_{e,k}^\dag {{\bm{{\rm O}}}_k}{{\bf{v}}_k}} \right\|}^2}{\rm{ + }}\delta _{su,k}^2}}, 
	\end{equation}
\end{figure*}
\setcounter{equation}{33}
\begin{equation}
	\sum\limits_{i = 1}^N {{\rm{Tr}}\left( {{{\bf{H}}_{e,i}}{{\bf{W}}_i}} \right)}  + {\rm{Tr}}\left( {{{\bf{G}}_{e,k}}{{\bf{F}}_k}} \right){\rm{ + }}1\le {e^{{{\tilde q}_k}}}\left( {{q_k} - {{\tilde q}_k} + 1} \right),\label{Eq-28}
\end{equation}
\begin{equation}
	\sum\limits_{i=1}^N {{\rm{Tr}}\left( {{{\bf{H}}_{tu,i}}{{\bf{W}}_i}} \right)}  + 1{\rm{ }} \le {e^{{{\tilde \eta }_k}}}\left( {{\eta _k} - {{\tilde \eta }_k} + 1} \right), \label{Eq-32}
\end{equation}
where $ {{\tilde \mu}_k} $, $ {{\tilde q}_k} $, and $ {{{\tilde \eta }_k}} $ begin with an initial values.  
%\begin{equation}\label{Eq31}
%	{{\tilde \mu }_k} \buildrel \Delta \over = \ln \bigg( {\sum\limits_{i \ne k}^N {{\rm{Tr}}\left( {{{\bf{H}}_{su,i}}{{{\bf{\tilde W}}}_i}} \right)}  + {\rm{Tr}}\left( {{{\bf{G}}_{su,k}}\left( {{{{\bf{\tilde F}}}_k} } \right)} \right){\rm{ + }}1} \bigg)
%\end{equation}
%\begin{equation}\label{Eq32}
%	{{\tilde q}_k} \buildrel \Delta \over = \ln \bigg( {\sum\limits_{i = 1}^N {{\rm{Tr}}\left( {{{\bf{H}}_{e,i}}{{{\bf{\tilde W}}}_i}} \right)}  + {\rm{Tr}}\left( {{{\bf{G}}_{e,k}}\left( {{{{\bf{\tilde F}}}_k} } \right)} \right){\rm{ + }}1} \bigg)
%\end{equation}
\subsection{Successive Convex Approximation Approach}
 By defining the renewed optimization variables as a set, i.e., 
\begin{equation}\label{adx1}
	 \Im =\left\{ {\left\{ {{{\bf{W}}_k}} \right\},\left\{ {{{\bf{F}}_k}} \right\},\left\{ {{s_k}} \right\},\left\{ {{\mu _k}} \right\},\left\{ {{q_k}} \right\},\left\{ {{v_k}} \right\},\left\{ {{\tau _k}} \right\},\left\{ {{\eta _k}} \right\}} \right\}, \nonumber
\end{equation}
thus the primal problem $ {\mathcal{P}1} $ can be reformulated as 
 \begin{align}
 {\mathcal{P}2:}\quad& \mathop {\max}\limits_{{ \Im }} \quad \sum\limits_{k = 1}^N {\left( {{s_k} - {\mu_k} - {q_k} + {v_k}} \right)}  \tag{36a} \label{p22-1}\\
\text{s.t.}\quad
& \sum\limits_{k = 1}^N {{\rm{Tr}}\left( {{{\bf{W}}_k}} \right)}  \le {P_S} ,k= 1,2,...,N \tag{36b}, \label{p2-222}\\
& {{\rm{Tr}}\left( {{{\bf{F}}_k} } \right)}  \le {P_B}, k= 1,2,...,N, \tag{36c}\label{p2-4}\\
& {{\bf{F}}_k} \succeq 0,{\bf{W}}_k \succeq 0, \tag{36d} \label{p2-6}\\
%& Rank\left( {{{\bf{W}}_k}} \right) = 1,Rank\left( {{{\bf{F}}_k}} \right) = 1. \tag{25g} \label{p2-7}
& (\ref{Eq-16}), (\ref{Eq-19}), (\ref{Eq24}-\ref{Eq26-6}), (\ref{Eq26-8}), (\ref{Eq-27}-\ref{Eq-32}).  \tag{36e} \label{p2-5}
\end{align}
%	\setcounter{equation}{40} 
%\begin{figure*}[b]
%	\hrulefill
%	\begin{equation}\label{SINR_Eve4}
%		{\gamma _{se,k}}{\rm{ = }}\frac{{{{\left\| {{\bf{h}}_{e,k}^\dag {{\bf{w}}_k}} \right\|}^{\rm{2}}}}}{{\sum\limits_{i \ne k}^N {{{\left\| {{\bf{h}}_{e,i}^\dag {{\bf{w}}_i}} \right\|}^{\rm{2}}}}  + {\ell _k} {P_B}{{\left\| {{\bf{g}}_{e,k}^\dag {\bf{f}}_k^{mrt}} \right\|}^{\rm{2}}}{\rm{ + }}\left( {1 - {\ell _k} } \right){P_B}{{\left\| {{\bf{g}}_{e,k}^\dag {{\bm{{\rm O}}}_k}{{\bf{v}}_k}} \right\|}^2}{\rm{ + }}\delta _{su,k}^2}}, 
%	\end{equation}
%\label{sek111}
%\end{figure*}

To solve ${\mathcal{P}2}$, we adopt the cvx tool and carry out a SCA based approach, where  $ {{\tilde \mu}_k} $, $ {{\tilde q}_k} $, and $ {{{\tilde \eta }_k}} $ can be updated by each iteration of SCA. Particularly, the details of SCA-based joint satellite-terrestrial BF optimization can be seen in algorithm table 1. In addition, the main computational complexity is solving the convex approximation problem in each iteration. Considering the
 SDP for solving 
$ {{\bf{w}}_n^\star}$ and $ {{\bf{f}}_n^\star}$ can be calculated by $ {\cal{O}} (max\{m,n\}^\text{4}n^\text{1/2}) $, where $m$ and $n$ are the constraint order and the dimension of equality constraints for SDP, respectively. Thus, the total complexity can be calculated as $ t\cdot ({\cal{O}} ((8N+2)^\text{4}) +log(1/\epsilon))$.

\begin{algorithm}
\caption{SCA-based joint satellite-terrestrial BF optimization}	
\KwIn{$\{ {Q}_{tu,k} \}, \varepsilon, P_S, P_B$.}
\KwResult{\{$ {{\bf{w}}_n^\star}, n =1, ..., N \}$, $ \{{{\bf{f}}_k^\star}, k =1, ..., N $\}.}
\textbf{Initialization}: $\left\{ {\tilde \eta _k^0 } \right\},\left\{ {\tilde \mu _k^0 } \right\},\left\{ {\tilde q_k^0 } \right\}$.\\
%\textbf{repeat}\\
Set step $ t=0 $;\\
\Repeat{$\left| {R_{s,sum}^t - R_{s,sum}^{t - 1}} \right| < \epsilon $}
{

%	$\{ s_k^t = s_k^ \star; \tilde \eta _k^t = \tilde \eta _k^ \star; \tilde \mu _k^t = \tilde \mu _k^ \star; \tilde q_k^t = \tilde q_k^ \star; v_k^t = v_k^ \star  \}$.
	
%	\While{$\left( \left| u^\star - q^\star \right| >\epsilon_0 \right) $}{
		Using the CVX solver SDP to solve $ \mathcal{P}2 $; \\
		\KwOut{$ \left\{ {s_k^ \star } \right\}, \left\{ { \eta _k^ \star } \right\}, \left\{ { \mu _k^ \star } \right\}, \left\{ { q_k^ \star } \right\},\left\{ {v_k^ \star } \right\} $, $\left\{ {{\bf{W}}_k^ \star } \right\},\left\{ {{\bf{F}}_k^ \star } \right\}$;}
		
		Obtain $ R_{s,sum}^t = \sum\limits_{k = 1}^N {s_k^ \star  - } \mu _k^ \star  - q_k^ \star  + v_k^ \star $ .\\
		$ \left\{ {\tilde \eta _k^t = \eta _k^ \star ;\tilde \mu _k^t = \mu _k^ \star ;\tilde q_k^t = q_k^ \star } \right\} $.\\
%		Return $ \phi =  \phi^\star$ to step 6;
%	}
%	\KwOut{$\left\{ {{\bf{W}}_k^ \star } \right\},\left\{ {{\bf{F}}_k^ \star } \right\}$.}
	%end repeat
}
Obtain $ \left\{ {{\bf{w}}_n^ \star } \right\}$ and $\left\{ {{\bf{f}}_n^ \star } \right\}$ by the singular value decomposition (SVD) of $ \left\{ {{\bf{W}}_k^ \star } \right\}$ and $\left\{ {{\bf{F}}_k^ \star } \right\}$.\\
\textbf{Procedure End}	
	
\end{algorithm}

In addition, to analyze the tightness of SDR from ${\mathcal{P}1}$ to ${\mathcal{P}2}$, the rank-one of $ \{{{\bf{W}}_k}$\} and $\{{{\bf{F}}_k}\} $ is proved as follows.
\begin{proposition}
We consider a power minimization problem, which can be expressed as 
\begin{align}
	{\mathcal{P}3:}\quad& \mathop {\min }\limits_\Im  \sum\limits_{i = 1}^N {{\rm{Tr}}\left( {{{\bf{W}}_i}} + {{{\bf{F}}_i}}\right)}  \tag{37a} \label{p3-1}\\
	\text{s.t.} \quad & \sum\limits_{k = 1}^N {\left( {{s_k} - {\mu _k} - {q_k} + {v_k}} \right)}  \ge {\varphi ^\circ } \tag{37b}, \label{p3-2}\\
	& (\ref{p2-222}-\ref{p2-5}), \tag{37c} \label{p3-3}
\end{align}
where $ \varphi ^\circ  $ denotes the optimal objective value of ${\mathcal{P}2}$.
It can be obtained that any feasible solutions of ${\mathcal{P}3}$ is optimal for ${\mathcal{P}2}$.
\begin{proof}
	Assuming that ${\Im^\star } $ is the optimal solution for solving ${\mathcal{P}3}$, then the following condition is restrictively satisfied
%	\newcounter{esk}                        
%	% 创建临时变量TempEqCnt
%	\setcounter{esk}{\value{equation}} 
%	% 将当前公式序号 赋给TempEqCnt
	\setcounter{equation}{37}
	\begin{equation}\label{EqPro1}
		{\varphi^\circ} \le \sum\limits_{k = 1}^N {\left( {s_k^\star  - {\mu _k^\star} - {q_k^\star} + {v_k^\star}} \right)},  
	\end{equation}
with $ {\bf{W}}_k^ \star  $ and $ {\bf{F}}_k^ \star $ satisfying constraints in (\ref{p2-222}-\ref{p2-5}).

From (\ref{EqPro1}), it can be found that the maximum $ {\varphi^\circ} $ is reached when $ \sum\limits_{k = 1}^N {\left( {{s_k} - {\mu_k} - {q_k} + {v_k}} \right)} $ has the optimal solution at ${\Im^\star } $. Thus, ${\Im^\star } $ is the optimal solution for ${\mathcal{P}2}$ with objective value $ {\varphi^\circ} $ and the proof is concluded.
\end{proof}
\end{proposition}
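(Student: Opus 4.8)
The plan is to prove the claim by an elementary sandwich (squeeze) argument that plays the feasible region of $\mathcal{P}3$ against the very definition of $\varphi^\circ$. The first thing I would record is that $\mathcal{P}3$ and $\mathcal{P}2$ share \emph{exactly} the constraints \eqref{p2-222}--\eqref{p2-5}, and that $\mathcal{P}3$ merely appends the single extra inequality \eqref{p3-2}. Consequently, every point $\Im$ that is feasible for $\mathcal{P}3$ is automatically feasible for $\mathcal{P}2$; that is, the feasible set of $\mathcal{P}3$ is contained in that of $\mathcal{P}2$. I would also note up front that this set is nonempty, since any maximizer of $\mathcal{P}2$ attains the value $\varphi^\circ$ and therefore satisfies \eqref{p3-2}, so $\mathcal{P}3$ is well posed.

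Next I would combine two opposing bounds on the $\mathcal{P}2$-objective $\sum_{k=1}^N (s_k-\mu_k-q_k+v_k)$ evaluated at an arbitrary $\mathcal{P}3$-feasible point $\Im^\star$. On one hand, because $\Im^\star$ is $\mathcal{P}2$-feasible and $\varphi^\circ$ is by definition the maximal value of $\mathcal{P}2$, we obtain the upper bound $\sum_{k}(s_k^\star-\mu_k^\star-q_k^\star+v_k^\star)\le\varphi^\circ$. On the other hand, feasibility of $\Im^\star$ for $\mathcal{P}3$ forces the constraint \eqref{p3-2}, which gives the reverse lower bound $\sum_{k}(s_k^\star-\mu_k^\star-q_k^\star+v_k^\star)\ge\varphi^\circ$. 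Squeezing these two inequalities yields equality, i.e. the $\mathcal{P}2$-objective at $\Im^\star$ equals $\varphi^\circ$.

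I would then read off the conclusion directly: since $\Im^\star$ is $\mathcal{P}2$-feasible and achieves the optimal value $\varphi^\circ$, it is an optimal solution of $\mathcal{P}2$; as $\Im^\star$ was an arbitrary feasible point of $\mathcal{P}3$, every feasible solution of $\mathcal{P}3$ is optimal for $\mathcal{P}2$, which is the assertion. This also records the key consequence that \eqref{p3-2} is active (tight) at every $\mathcal{P}3$-feasible point, a fact I expect to reuse when extracting the rank-one property.

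As for difficulty, the logical core here is routine and presents no genuine analytic obstacle within the statement itself; the only point that needs care is making explicit that the tightness of \eqref{p3-2} relies on $\varphi^\circ$ being the \emph{exact} optimum of $\mathcal{P}2$ rather than a mere lower target, so that the upper-bound half of the squeeze is valid. The substantive work lies downstream of this proposition: it is the passage from this equivalence, via the KKT / Lagrangian structure of the power-minimization problem $\mathcal{P}3$, to forcing $\mathbf{W}_k$ and $\mathbf{F}_k$ to be rank one, and that is where the real effort will be spent.
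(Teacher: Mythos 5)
Your proposal is correct and follows essentially the same route as the paper: both rest on the observation that any point feasible for $\mathcal{P}3$ satisfies all constraints of $\mathcal{P}2$ together with \eqref{p3-2}, so the $\mathcal{P}2$-objective at that point is squeezed between $\varphi^\circ$ (from the added constraint) and $\varphi^\circ$ (from optimality of $\varphi^\circ$ for $\mathcal{P}2$), forcing equality. If anything, your write-up is more faithful to the stated claim than the paper's own proof, which only discusses the optimal solution of $\mathcal{P}3$ rather than an arbitrary feasible point, and you additionally record the useful facts of nonemptiness and tightness of \eqref{p3-2}.
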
 
Based on Proposition 1, we indirectly prove the rank-one condition in ${\mathcal{P}2}$ by proving that in ${\mathcal{P}3}$ by the following theorem.
\begin{theorem}
	For any feasible solutions of $ {{{\bf{W}}_k}} $ and $ {{{\bf{F}}_k}} $ from ${\mathcal{P}3}$, $ Rank\left( {{{\bf{W}}_k}} \right) = 1 $ and $ Rank\left( {{{\bf{F}}_k}} \right) = 1 $.
\end{theorem}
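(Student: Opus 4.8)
The plan is to prove tightness through the Karush--Kuhn--Tucker (KKT) conditions of the power‑minimization surrogate $\mathcal{P}3$, and then invoke Proposition~1 to transfer the rank‑one property to $\mathcal{P}2$. After the Taylor linearizations and the SDR, $\mathcal{P}3$ is a convex semidefinite program: the variables $\{\mathbf{W}_k\},\{\mathbf{F}_k\}$ enter linearly in the objective $\sum_i \mathrm{Tr}(\mathbf{W}_i+\mathbf{F}_i)$ and in every constraint of \eqref{p3-3}. Assuming a strictly feasible point (Slater's condition), strong duality holds and KKT is necessary and sufficient for optimality. First I would form the Lagrangian, attaching a positive semidefinite matrix multiplier $\mathbf{Z}_k\succeq\mathbf{0}$ to $\mathbf{W}_k\succeq\mathbf{0}$ and $\mathbf{Y}_k\succeq\mathbf{0}$ to $\mathbf{F}_k\succeq\mathbf{0}$, a scalar $\lambda\ge 0$ to the satellite power budget \eqref{p2-222}, scalars $\beta_k\ge 0$ to each BS budget \eqref{p2-4}, and nonnegative scalars to every trace inequality in \eqref{p2-5} in which $\mathbf{W}_k$ or $\mathbf{F}_k$ occurs.

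The central step is the stationarity condition. Differentiating with respect to $\mathbf{W}_k$ and setting the result to zero gives an identity of the form
\[ \mathbf{Z}_k=\bm{\Delta}_k-a_k\mathbf{H}_{su,k}, \]
where $a_k\ge 0$ is the multiplier of the numerator bound \eqref{Eq-16}, in which the \emph{desired} rank‑one channel $\mathbf{H}_{su,k}=\mathbf{h}_{su,k}\mathbf{h}_{su,k}^{\dagger}$ of $\mathrm{SU}_k$ enters $\mathbf{W}_k$ with a negative sign, and $\bm{\Delta}_k$ collects the identity $\mathbf{I}$ from the objective, the $(\lambda+\text{budget})\mathbf{I}$ term, and all remaining interference/eavesdropping matrices weighted by nonnegative multipliers. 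Since every summand in $\bm{\Delta}_k$ is positive semidefinite and $\mathbf{I}$ is present, $\bm{\Delta}_k\succ\mathbf{0}$, so $\mathrm{Rank}(\mathbf{Z}_k)\ge \mathrm{Rank}(\bm{\Delta}_k)-\mathrm{Rank}(a_k\mathbf{H}_{su,k})\ge N-1$. Complementary slackness $\mathbf{Z}_k\mathbf{W}_k=\mathbf{0}$ then forces $\mathrm{Rank}(\mathbf{W}_k)\le N-\mathrm{Rank}(\mathbf{Z}_k)\le 1$, and feasibility with a positive secrecy rate excludes $\mathbf{W}_k=\mathbf{0}$, giving $\mathrm{Rank}(\mathbf{W}_k)=1$. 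The same argument, now in dimension $M$, applies to the stationarity in $\mathbf{F}_k$: the desired channel $\mathbf{G}_{tu,k}=\mathbf{g}_{tu,k}\mathbf{g}_{tu,k}^{\dagger}$ of $\mathrm{TU}_k$ enters only through the numerator bound \eqref{Eq26-6}, yielding $\mathbf{Y}_k=\bm{\Theta}_k-c_k\mathbf{G}_{tu,k}$ with $\bm{\Theta}_k\succ\mathbf{0}$, hence $\mathrm{Rank}(\mathbf{F}_k)=1$.

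The main obstacle I anticipate is the sign bookkeeping required to guarantee that the coefficient matrix of $\mathbf{W}_k$ (resp. $\mathbf{F}_k$) really does split as a positive‑definite part minus a \emph{single} rank‑one term. Each interference channel appears with opposite signs in the numerator‑type bounds and in the denominator‑type (Taylor‑linearized) bounds: for $\mathbf{F}_k$ the channels $\mathbf{G}_{su,k}$ (via \eqref{Eq-16} and \eqref{Eq-27}) and $\mathbf{G}_{e,k}$ (via \eqref{Eq-19} and \eqref{Eq-28}) leave residual combinations $(b_k-a_k)\mathbf{G}_{su,k}$ and $(f_k-d_k)\mathbf{G}_{e,k}$, and analogously for $\mathbf{W}_k$ the Eve channel $\mathbf{H}_{e,k}$ (through \eqref{Eq-19}, \eqref{Eq-28}, \eqref{Eq26-8}) and the TU channel $\mathbf{H}_{tu,k}$ (through \eqref{Eq26-6}, \eqref{Eq-32}) appear with mixed signs. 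I must verify that these residual rank‑one pieces can be absorbed into $\bm{\Delta}_k$ and $\bm{\Theta}_k$ without destroying positive definiteness, so that only the genuine desired channel remains as the rank‑reducing term. I would settle this using the fact that all these constraints are active at the optimum (they hold with equality by the monotonicity noted after \eqref{Eq-19}), together with the unit‑coefficient $\mathbf{I}$ contributed by the power objective, which I expect to dominate the negative perturbations and keep $\bm{\Delta}_k,\bm{\Theta}_k\succ\mathbf{0}$. Once rank‑one is established for $\mathcal{P}3$, Proposition~1 identifies its feasible solutions with optimal solutions of $\mathcal{P}2$, so the relaxation from $\mathcal{P}1$ to $\mathcal{P}2$ is tight and the beamformers $\mathbf{w}_k^{\star},\mathbf{f}_k^{\star}$ are recovered exactly by the singular value decomposition used in Algorithm~1.
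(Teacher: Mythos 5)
Your overall strategy (KKT analysis of the power-minimization problem $\mathcal{P}3$, complementary slackness on the dual slack matrix, then Proposition~1 to transfer the conclusion back to $\mathcal{P}2$) is the same as the paper's, but your central step contains a genuine gap --- and it is exactly the obstacle you flag at the end. Your argument needs the decomposition $\mathbf{Z}_k=\bm{\Delta}_k-a_k\mathbf{H}_{su,k}$ with $\bm{\Delta}_k\succ\mathbf{0}$, so that $\mathrm{Rank}(\mathbf{Z}_k)\ge N-1$ in one shot. But in the stationarity condition the matrices $\mathbf{H}_{e,k}$ and $\mathbf{H}_{tu,k}$ enter with coefficients that are \emph{differences} of nonnegative multipliers (in the paper's notation, $\sum_i\zeta_i-\sum_{i\ne k}\rho_i-\sum_i\varsigma_i$ for $\mathbf{H}_{e,k}$ and $\sum_i\upsilon_i-\sum_i\vartheta_i$ for $\mathbf{H}_{tu,k}$), because each of these channels appears both in ``$\le$''-type constraints such as \eqref{Eq-19}, \eqref{Eq26-6}, \eqref{Eq26-8} and in the Taylor-linearized ``$\ge$''-type constraints \eqref{Eq-28}, \eqref{Eq-32}. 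Hence $\bm{\Delta}_k$ is in general indefinite. Your proposed repair --- that the constraints are active at the optimum and that the identity term ``dominates'' the negative perturbations --- does not work: activeness of a constraint says nothing about the magnitude of its multiplier, KKT multipliers are not bounded a priori, and therefore nothing prevents the negative rank-one contributions from destroying positive definiteness. With an indefinite $\bm{\Delta}_k$ you can only conclude $\mathrm{Rank}(\mathbf{Z}_k)\ge N-3$ (identity plus three rank-one perturbations), which via complementary slackness gives $\mathrm{Rank}(\mathbf{W}_k)\le 3$, far short of the claim.

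The paper's proof is built precisely to get around this. It never asserts that the full coefficient matrix $\mathbf{D}$ is positive definite; it only uses $N-2\le \mathrm{rank}(\mathbf{D})\le N$ (identity plus the nonnegatively-weighted $\mathbf{H}_{su,k}$ term stays positive definite, and the two sign-indefinite rank-one terms can each drop the rank by at most one). Combining $\mathbf{E}\mathbf{W}_k=\mathbf{H}_{su,k}\mathbf{W}_k\sum_k\lambda_k$ with Sylvester's rank inequality, and ruling out $\mathrm{rank}(\mathbf{E}\mathbf{W}_k)=1$ by contradiction, it first obtains only the weaker bound $\mathrm{rank}(\mathbf{W}_k)\le 2$. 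It then isolates the genuinely positive-definite part $\bm{\Theta}\succ\mathbf{0}$ (collecting only the nonnegative-coefficient terms), moves the negatively-weighted $\mathbf{H}_{e,k}$ and $\mathbf{H}_{tu,k}$ terms to the other side as in \eqref{Eq45}--\eqref{Eq46}, and uses $\mathrm{rank}(\bm{\Theta}\mathbf{W}_k)=\mathrm{rank}(\mathbf{W}_k)$ together with the fact that each of the two residual products has rank at most one to eliminate the rank-two case by contradiction, leaving $\mathrm{rank}(\mathbf{W}_k)=1$; the same two-stage argument is repeated for $\mathbf{F}_k$. In short, the sign bookkeeping you deferred is not a technicality to be absorbed --- it is the core difficulty, and resolving it requires the weaker-bound-plus-refinement structure of the paper's proof rather than the single positive-definite-minus-rank-one step you propose.
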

\begin{proof}
	Please see the appendix A.
\end{proof}

\section{Joint Optimization of Satellite BF and BS PA}
For the reference propose, an alternative approach that jointly optimizing satellite BF and the PA of BS is given in this section, which is as a  benchmark in this paper.

We consider the BSs can use a partial transmission power to generate AN for confusing Eve deliberately. The SINRs of SU, TU, and Eve can be respectively rewritten as 
\begin{equation}\label{SU2}
	{\gamma _{su,k}}{\rm{ = }}\frac{{{{\left\| {{\bf{h}}_{su,k}^H{{\bf{w}}_k}} \right\|}^{\rm{2}}}}}{{\sum\limits_{i \ne k}^N {{{\left\| {{\bf{h}}_{su,i}^H{{\bf{w}}_i}} \right\|}^{\rm{2}}}}  + {\ell _k} {P_B}{{\left\| {{\bf{g}}_{su,k}^H{\bf{f}}_k^{mrt}} \right\|}^{\rm{2}}}{\rm{ + }}\delta _{su,k}^2}},
\end{equation}
	\setcounter{equation}{40}
\begin{equation}\label{TU2}
{\gamma _{tu,k}}{\rm{ = }}{\ell _k}{P_B}{\left\| {{\bf{g}}_{tu,k}^\dag {\bf{f}}_k^{mrt}} \right\|^{\rm{2}}}/\big( {\sum\limits_{i = 1}^N {{{\left\| {{\bf{h}}_{tu,i}^\dag {{\bf{w}}_i}} \right\|}^{\rm{2}}}}  + \delta _{tu,k}^2} \big),
\end{equation}
\begin{equation}\label{SINR_Eve3}
{\gamma _{te,k}}{\rm{ = }}\frac{{{\ell _k} {P_B}{{\left\| {{\bf{g}}_{e,k}^\dag {\bf{f}}_k^{mrt}} \right\|}^{\rm{2}}}}}{{\sum\limits_{i = 1}^N {{{\left\| {{\bf{h}}_{e,i}^\dag {{\bf{w}}_i}} \right\|}^{\rm{2}}}} {\rm{ + }}\left( {1 - {\ell _k} } \right){P_B}{{\left\| {{\bf{g}}_{e,k}^\dag {{\bm{{\rm O}}}_k}{{\bf{v}}_k}} \right\|}^2}{\rm{ + }}\delta _{tu,k}^2}},	
\end{equation}
where $ {\ell _k} $ is the PA coefficient of BS, $ {\bf{f}}_k^{mrt} = {{{{\bf{g}}_{tu,k}}} {\left/
		{\vphantom {{{{\bf{g}}_{tu,k}}} {\left\| {{{\bf{g}}_{tu,k}}} \right\|}}} \right.
		} {\left\| {{{\bf{g}}_{tu,k}}} \right\|}} $ denotes the MRT-based BF vector of BS for transmitting useful signal to TU, 
	$ {{\bf{v}}_k} \in \mathbb{C}^{(M-1) \times 1}$ is the AN vector.
	and $ {{\bf{O}}_k} = \left( {{\bf{I}} - {\bm{\hbar}}\left( {{{\bm{\hbar}}^\dag }{\bm{\hbar}}} \right){{\bm{\hbar}}^\dag }} \right){{\bf{g}}_{e,k}}$
with $ \bm{\hbar } =
	[{{{\bf{g}}_{tu}}}, {{{\bf{g}}_{su}}}]
$ is the projection matrix into the null of two legitimate channels, i.e., $ {{{\bf{g}}_{tu}}}$ and $  {{{\bf{g}}_{su}}} $. In addition, we make 
%\begin{equation}\label{ask1}
	$  {\bf{\Lambda} _k} = {{\bf{O}}_k}{{\bf{v}}_k}{\bf{v}}_k^\dag {\bf{O}}_k^\dag, $
%\end{equation}  
and 
\begin{equation}\label{ask2}
	 {{\bf{A}}_k} = \left( {1 - {\ell _k} } \right){P_B}{\rm{Tr}}\left( {{{\bf{G}}_{e,k}} \bf{{ \Lambda  }_k}} \right). 
\end{equation}
	
Based on (\ref{SU2}--\ref{ask2}), 
the secrecy rate of SU and TU are obtained in(\ref{Rs11}) and (\ref{Rtuk2}) as shown at the top of this page.
\begin{figure*}[t]
%	\begin{small}
	\begin{equation}\label{Rs11}
		R_s^{su,k} = {\log _2}\left( {\frac{{\sum\limits_i^N {{\rm{Tr}}\left( {{{\bf{H}}_{su,i}}{{\bf{W}}_i}} \right)}  + {\ell _k} {P_B}{\rm{Tr}}\left( {{{\bf{G}}_{su,k}}{\bf{F}}_k^{mrt}} \right){\rm{ + }}1}}{{\sum\limits_{i \ne k}^N {{\rm{Tr}}\left( {{{\bf{H}}_{su,i}}{{\bf{W}}_i}} \right)}  + {\ell _k} {P_B}{\rm{Tr}}\left( {{{\bf{G}}_{su,k}}{\bf{F}}_k^{mrt}} \right){\rm{ + }}1}}} \right) 
		- {\log _2}\left( {\frac{{\sum\limits_i^N {{\rm{Tr}}\left( {{{\bf{H}}_{e,i}}{{\bf{W}}_i}} \right)}  + {\ell _k} {P_B}{\rm{Tr}}\left( {{{\bf{G}}_{e,k}}{\bf{F}}_k^{mrt}} \right){\rm{ + }}{{\bf{A}}_k}{\rm{ + }}1}}{{\sum\limits_{i \ne k}^N {{\rm{Tr}}\left( {{{\bf{H}}_{e,i}}{{\bf{W}}_i}} \right)}  + {\ell _k} {P_B}{\rm{Tr}}\left( {{{\bf{G}}_{e,k}}{\bf{F}}_k^{mrt}} \right){\rm{ + }}{{\bf{A}}_k}{\rm{ + }}1}}} \right).
	\end{equation}
	\begin{equation}
		R_s^{tu,k} = {\log _2}\left( {\frac{{{\ell _k} {P_B}{\rm{Tr}}\left( {{{\bf{G}}_{tu,k}}{\bf{F}}_k^{mrt}} \right) + \sum\limits_{i = 1}^N {{\rm{Tr}}\left( {{{\bf{H}}_{tu,i}}{{\bf{W}}_i}} \right)}  + 1}}{{\sum\limits_{i = 1}^N {{\rm{Tr}}\left( {{{\bf{H}}_{tu,i}}{{\bf{W}}_i}} \right)}  + 1}}} \right) 
		- {\log _2}\left( {\frac{{{\ell _k} {P_B}{\rm{Tr}}\left( {{{\bf{G}}_{e,k}}{\bf{F}}_k^{mrt}} \right) + \sum\limits_{i = 1}^N {{\rm{Tr}}\left( {{{\bf{H}}_{e,i}}{{\bf{W}}_i}} \right)} {\rm{ + }}{{\bf{A}}_k}{\rm{ + }}1}}{{\sum\limits_{i = 1}^N {{\rm{Tr}}\left( {{{\bf{H}}_{e,i}}{{\bf{W}}_i}} \right)} {\rm{ + }}{{\bf{A}}_k}{\rm{ + }}1}}} \right).
		\label{Rtuk2}
	\end{equation}
	\hrulefill
%\end{small}
\end{figure*} 
Hence, another optimization problem targeting the primal objective and constraints in ${\mathcal{P}1}  $ can be formulated as
\begin{align}
	{\mathcal{P}4:}\quad& \mathop {\max }\limits_{\left\{ {{{\bf{w}}_k}} \right\}_{k = 1}^N, {\ell _k} }  \quad \sum\limits_{k = 1}^N {R_s^{su,k}}  \tag{47a} \label{p4-1}\\
	\text{s.t.}\quad
	&R_s^{tu,k} \ge {Q}_{tu,k},k= 1,2,...,N \tag{47b}, \label{p4-2}\\
	& \sum\limits_{k = 1}^N {{{\left\| {{{\bf{w}}_k}} \right\|}^{\rm{2}}}}  \le {P_S} \tag{47c}, \label{p4-3}\\
	& 0 \le {\ell _k}  \le 1 \tag{47d}. \label{p4-4}
\end{align}

From (\ref{Rs11}--\ref{p4-4}), it is observed that the problem $ {\mathcal{P}4} $ is also intractable since its non-convex objective function and constraints.
Similarly, the Taylor expansion is adopted to reformulate the problem $ {\mathcal{P}4} $ and changes by the corresponding exponential variables are made as follows. 
	\setcounter{equation}{47} 
\begin{equation}
	{e^{{{s'}_k}}} = \sum\limits_i^N {{\rm{Tr}}\left( {{{\bf{H}}_{su,i}}{{\bf{W}}_i}} \right)}  + {\ell _k} {P_B}{\rm{Tr}}\left( {{{\bf{G}}_{su,k}}{\bf{F}}_k^{mrt}} \right){\rm{ + }}1,\label{sk-}
\end{equation}
	\begin{equation}
	{e^{{{\mu '}_k}}} = \sum\limits_{i \ne k}^N {{\rm{Tr}}\left( {{{\bf{H}}_{su,i}}{{\bf{W}}_i}} \right)}  + {\ell _k} {P_B}{\rm{Tr}}\left( {{{\bf{G}}_{su,k}}{\bf{F}}_k^{mrt}} \right){\rm{ + }}1,\label{muk-}
\end{equation}
	\begin{equation}
	{e^{{{q'}_k}}} = \sum\limits_i^N {{\rm{Tr}}\left( {{{\bf{H}}_{e,i}}{{\bf{W}}_i}} \right)}  + {\ell _k} {P_B}{\rm{Tr}}\left( {{{\bf{G}}_{e,k}}{\bf{F}}_k^{mrt}} \right){\rm{ + }}{{\bf{A}}_k}{\rm{ + }}1,\label{qk-}
\end{equation}
	\begin{equation}
	{e^{{{v'}_k}}} = \sum\limits_{i \ne k}^N {{\rm{Tr}}\left( {{{\bf{H}}_{e,i}}{{\bf{W}}_i}} \right)}  + {\ell _k} {P_B}{\rm{Tr}}\left( {{{\bf{G}}_{e,k}}{\bf{F}}_k^{mrt}} \right){\rm{ + }}{{\bf{A}}_k}{\rm{ + }}1,\label{vk-}
\end{equation}
\begin{equation}
{e^{{{\tau '}_k}}} = {\ell _k} {P_B}{\rm{Tr}}\left( {{{\bf{G}}_{tu,k}}{\bf{F}}_k^{mrt}} \right) + \sum\limits_{i = 1}^N {{\rm{Tr}}\left( {{{\bf{H}}_{tu,i}}{{\bf{W}}_i}} \right)}  + 1,\label{tauk-}
\end{equation}
\begin{equation}
{e^{{{\eta '}_k}}} = \sum\limits_{i = 1}^N {{\rm{Tr}}\left( {{{\bf{H}}_{tu,i}}{{\bf{W}}_i}} \right)}  + 1,\label{etak-}
\end{equation}
\begin{equation}
	{e^{{{\alpha '}_k}}} = \sum\limits_{i = 1}^N {{\rm{Tr}}\left( {{{\bf{H}}_{e,i}}{{\bf{W}}_i}} \right)} {\rm{ + }}{{\bf{A}}_k}{\rm{ + }}1.\label{alphak-}
\end{equation}

By defining the renewed optimization variables as a set, i.e., 
\begin{equation}\label{adx2}
	\Im ' = \left\{ {\left\{ {{{\bf{W}}_k}} \right\},{\ell _k} ,\left\{ {{{s'}_k}} \right\},\left\{ {{{\mu '}_k}} \right\},\left\{ {{{q'}_k}} \right\},\left\{ {{{v'}_k}} \right\},\left\{ {{{\tau '}_k}} \right\},\left\{ {{{\eta '}_k}} \right\}} \right\}, \nonumber
\end{equation}
and using the above replacements in (\ref{sk-}--\ref{alphak-}), the problem $ {\mathcal{P}4} $ can be reformulated as  
\begin{align}
	{\mathcal{P}5:}\quad \mathop {\max }\limits_{\Im '} \sum\limits_{k = 1}^N {\left( {{{s'}_k} - {{\mu '}_k} - {{q'}_k} + {{v'}_k}} \right)}   \tag{55a} \label{p50-1}
\end{align}
\begin{align}
	\text{s.t.}\quad
	{{\eta '}_k} + {{q'}_k} - {{\tau '}_k} - {{\alpha '}_k} \le  - \frac{{{Q_{tu,k}}}}{{{{\log }_2}e}}, \tag{55b} \label{p50-2}
\end{align}
\begin{equation}
	 		{e^{{{s'}_k}}} \leq \sum\limits_i^N {{\rm{Tr}}\left( {{{\bf{H}}_{su,i}}{{\bf{W}}_i}} \right)}  + {\ell _k} {P_B}{\rm{Tr}}\left( {{{\bf{G}}_{su,k}}{\bf{F}}_k^{mrt}} \right){\rm{ + }}1,\tag{55c} \label{p50-3}
\end{equation}
\begin{align}
	\sum\limits_{i \ne k}^N {{\rm{Tr}}\left( {{{\bf{H}}_{su,i}}{{\bf{W}}_i}} \right)}  + {\ell _k} {P_B}{\rm{Tr}}\left( {{{\bf{G}}_{su,k}}{\bf{F}}_k^{mrt}} \right){\rm{ + }}1 \nonumber\\
	\le {e^{{{\tilde \mu '}_k}}}\left( {{{\mu '}_k} - {{\tilde \mu '}_k} + 1} \right),\tag{55d}\label{p50-4}
\end{align}
\begin{align}
		\sum\limits_i^N {{\rm{Tr}}\left( {{{\bf{H}}_{e,i}}{{\bf{W}}_i}} \right)}  + {\ell _k} {P_B}{\rm{Tr}}\left( {{{\bf{G}}_{e,k}}{\bf{F}}_k^{mrt}} \right){\rm{ + }}{{\bf{A}}_k}{\rm{ + }}1 \nonumber\\
		\le {e^{{{\tilde q'}_k}}}\left( {{{q'}_k} - {{\tilde q'}_k} + 1} \right),\tag{55e}\label{p50-5}
\end{align}
\begin{equation}
		{e^{{{v'}_k}}} \leq \sum\limits_{i \ne k}^N {{\rm{Tr}}\left( {{{\bf{H}}_{e,i}}{{\bf{W}}_i}} \right)}  + {\ell _k} {P_B}{\rm{Tr}}\left( {{{\bf{G}}_{e,k}}{\bf{F}}_k^{mrt}} \right){\rm{ + }}{{\bf{A}}_k}{\rm{ + }}1,\tag{55f}\label{p50-6}
\end{equation}
\begin{equation}
		{e^{{{\tau '}_k}}} \leq {\ell _k} {P_B}{\rm{Tr}}\left( {{{\bf{G}}_{tu,k}}{\bf{F}}_k^{mrt}} \right) + \sum\limits_{i = 1}^N {{\rm{Tr}}\left( {{{\bf{H}}_{tu,i}}{{\bf{W}}_i}} \right)}  + 1,\tag{55g}\label{p50-7}
	\end{equation}
\begin{equation}
		\sum\limits_{i = 1}^N {{\rm{Tr}}\left( {{{\bf{H}}_{tu,i}}{{\bf{W}}_i}} \right)}  + 1 \le {e^{{{\tilde \eta '}_k}}}\left( {{{\eta '}_k} - {{\tilde \eta '}_k} + 1} \right),\tag{55h}\label{p50-8}
\end{equation}
\begin{equation}		{e^{{{\alpha '}_k}}} \leq \sum\limits_{i = 1}^N {{\rm{Tr}}\left( {{{\bf{H}}_{e,i}}{{\bf{W}}_i}} \right)} {\rm{ + }}{{\bf{A}}_k}{\rm{ + }}1,\tag{55i}\label{p50-9}
\end{equation}
\begin{align} \sum\limits_{k = 1}^N {{\rm{Tr}}\left( {{{\bf{W}}_k}} \right)}  \le {P_S}, \tag{53j}\label{p50-10} \nonumber\\
	 0 \le {\ell _k}  \le 1 \tag{55k}. \label{p50-11}
\end{align}  

From (50), we can see that the problem $ {\mathcal{P}5} $ can also be solved by a SCA-based optimization algorithm. Finally, we conduct the solving procedure in the algorithm table 2. 
\begin{algorithm}
	\caption{SCA-based joint satellite BF and the PA of BS optimization}	
	\KwIn{$\{ {Q}_{tu,k} \}, \varepsilon, P_S, P_B$.}
	\KwResult{\{$ {{\bf{w}}_n^\star}, n =1, ..., N \}$, $ \{{\ell _k}, k =1, ..., N $\}.}
	\textbf{Initialization}: $\left\{ {\tilde \eta '}_k \right\},\left\{ {\tilde \mu' }_k \right\},\left\{ {\tilde q'}_k \right\}$.\\
	%\textbf{repeat}\\
	Set step $ t=0 $;\\
	Calculate $ {\bf{f}}_k^{mrt} = {{{{\bf{g}}_{tu,k}}} {\left/
			{\vphantom {{{{\bf{g}}_{tu,k}}} {\left\| {{{\bf{g}}_{tu,k}}} \right\|}}} \right.
		} {\left\| {{{\bf{g}}_{tu,k}}} \right\|}} $; \\
	Obtain ${\bf{F}}_k^{mrt} = {\bf{f}}_k^{mrt}{\left( {{\bf{f}}_k^{mrt}} \right)^\dag }$; 
	\\
	\Repeat{$\left| {R_{s,sum}^t - R_{s,sum}^{t - 1}} \right| < \epsilon $}
	{

		%	$\{ s_k^t = s_k^ \star; \tilde \eta _k^t = \tilde \eta _k^ \star; \tilde \mu _k^t = \tilde \mu _k^ \star; \tilde q_k^t = \tilde q_k^ \star; v_k^t = v_k^ \star  \}$.
		
		%	\While{$\left( \left| u^\star - q^\star \right| >\epsilon_0 \right) $}{
		Using the CVX solver SDP to solve $ \mathcal{P}2 $; \\
		\KwOut{$ \left\{ {{s'}_k^ \star } \right\}, \left\{ { {\eta'} _k^ \star } \right\}, \left\{ { {\mu'} _k^ \star } \right\}, \left\{ { {q'}_k^ \star } \right\},\left\{ {{v'}_k^ \star } \right\} $, $\left\{ {{\bf{W}}_k^ \star } \right\},\left\{ {\ell _k} \right\}$;}
		
		Obtain $ R_{s,sum}^t = \sum\limits_{k = 1}^N {{s'}_k^ \star  - } {\mu'} _k^ \star  - {q'}_k^ \star  + {v'}_k^ \star $ .\\
		Update $ \eta _k $, $\mu _k$, and $ q_k $ by $ \eta _k^ \star $, $\mu _k^ \star$, and $ q_k^ \star   $, respectively.\\
		%		Return $ \phi =  \phi^\star$ to step 6;
		%	}
		%	\KwOut{$\left\{ {{\bf{W}}_k^ \star } \right\},\left\{ {{\bf{F}}_k^ \star } \right\}$.}
		%end repeat
	}
	Obtain $ \left\{ {{\bf{w}}_n^ \star } \right\}$ by the singular value decomposition (SVD) of $ \left\{ {{\bf{W}}_k^ \star } \right\}$.\\
	\textbf{Procedure End}	
	
\end{algorithm}

\section{Performance Evaluation}
\begin{table}[!h]
	\centering
	\caption{System Parameters Setting}
	\label{tab2}
	\begin{tabular}{lll}
		\toprule
		\midrule
		System Parameters  & Numerical Value \\
		\midrule
		\emph{Satellite-to-ground channel parameters}\\
		Satellite height   & 600 Km \\
		Carrier frequency & 2 GHz\\
		Maximum beam gain  & 46.6 dB \\
		3 dB angle (for all beams)  &0.4$^\circ $ \\
		Rain attenuation parameters  &$ {\mu _{{\zeta _{dB}}}} =$ -3.152, ${\delta ^2} =$ 1.6  \\
		\emph{Terrestrial BS channel parameters}\\
		Channel power gain  &  -38.46  dB \\
		Nakagami-$m$ channel parameters&  $ m=2 $, $ \Omega =1 $\\
		\bottomrule
	\end{tabular}
\end{table}
In this section, we conduct extensive numerical simulations to evaluate the secrecy performance.
The main system parameters are set in Table I. Specifically, the height of satellite orbit is 600 Km and the maximum beam gain $ G_{max} $ is set to 46.6 dB. The 3 dB angles of all beams are set to $ {\alpha _{i,3dB}}=0.4^\circ $. The rain attenuation parameters of downlink satellite-terrestrial channels are set to $ {\mu _{{\zeta _{dB}}}} = -3.152$ and ${\delta ^2} = 1.6$. The carrier frequency of satellite downlink transmission is 2 GHz.
The channel power gain of BS downlink at the reference distance of 1 m is $ -38.46 $ dB. The horizontal distance from BS to SU, TU, and Eve is 100 m, 100 m, 120m, respectively. For convenience, the same secrecy rate constraint of TUs in different satellite beams is preset as $ Q $ for the simulation. Particularly, the impact of BS transmission power ($ P_B $), satellite transmission power ($ P_S $), number of BS transmit antennas ($ M $), and secrecy rate constraint of TUs ($ Q $) on the maximum secrecy rate performance of SU are evaluated as follows. 

\begin{figure}[h]
	\centering
	\includegraphics[width=0.47\textwidth]{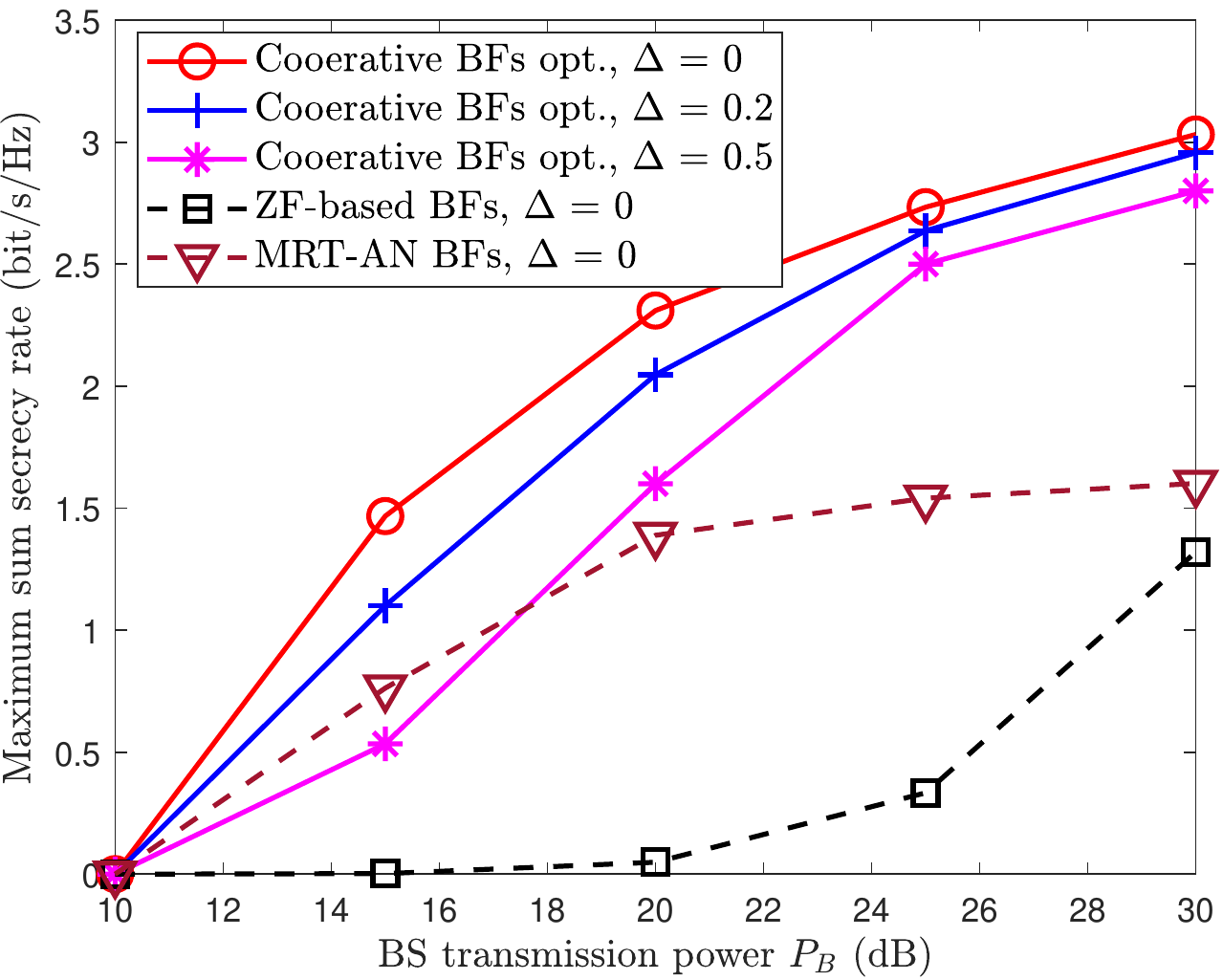}
	\caption{Secrecy rate performance of SUs vs. BS transmission power. (${P_S} = 20$ dB, $ Q = 0.5$ bit/s/Hz, $ M = 4 $.)}
	\label{PB}
\end{figure}
Fig. \ref{PB} shows the impact of BS transmission power on the maximum sum secrecy rate of SUs. From Fig. \ref{PB}, it can be seen that the maximum secrecy rate of SU is monotonically increasing of the BS transmission power. This is because the more green interference from BS downlink degrades the eavesdropping of satellite link as the BS transmission power increases. Particularly, it can be found that the objective value in (\ref{p3-1}) increases as the BS transmission power ($ {{\rm{Tr}}\left(  {{{\bf{F}}_i}}\right)} $) increases. According to the Proposition 1, the objective function in the problem ${\mathcal{P}2}$ in (\ref{p22-1}) has the same monotonicity as that in (\ref{p3-1}), which indicates the maximum sum secrecy rate of SUs increases as the BS transmission power. In addition, our proposed cooperative BFs optimization approach outperforms these two benchmarks, e.g., the approach that jointly optimizes both satellite BF and the PA of BS, and the approach that optimizes satellite BF with ZF-based BS BF. This is because the BS power focuses more on the main channel of TU by MRT-based BF while the AN damages the Eve poorly. Whereas the ZF-based BS BF approach doesn't concentrate the green interference and the AN from BS to damage the Eve, which improves the secrecy performance of SU slightly only by the satellite BF since the similarity of satellite channels.   

In Fig. \ref{SRvsPs}, the impact of satellite transmission power on the maximum sum secrecy rate of SUs is evaluated. 
From Fig. \ref{SRvsPs}, it can be seen that the maximum sum secrecy rate of SUs is monotonically increasing of the satellite transmission power. This is because the more green interference from satellite degrades the eavesdropping of TU as the satellite transmission power increases, thus the more BS power resource can serve as the green interfere for SU to degrade the Eve. Similarly, the objective value in (\ref{p3-1}) increases as the satellite transmission power ($ {{\rm{Tr}}\left(  {{{\bf{W}}_i}}\right)} $) increases. According to the Proposition 1, the maximum sum secrecy rate of SUs increases as the satellite transmission power. In addition, our proposed cooperative BFs optimization approach outperforms the approach that jointly optimizes both satellite BF and the PA of BS and the approach that optimizes satellite BF with ZF-based BS BF. Particularly, the approach that optimizes satellite BF with ZF-based BS BF outperforms the approach that jointly optimizes both satellite BF in low satellite transmission power region, which indicates that the main channel of TU is damaged slightly by satellite with low transmission power and the BS tends to suppress the Eve rather than to enhance the main channel of TU for guaranteeing the secrecy rate constraint.    

\begin{figure}[t]
	\centering
	\includegraphics[width=0.47\textwidth]{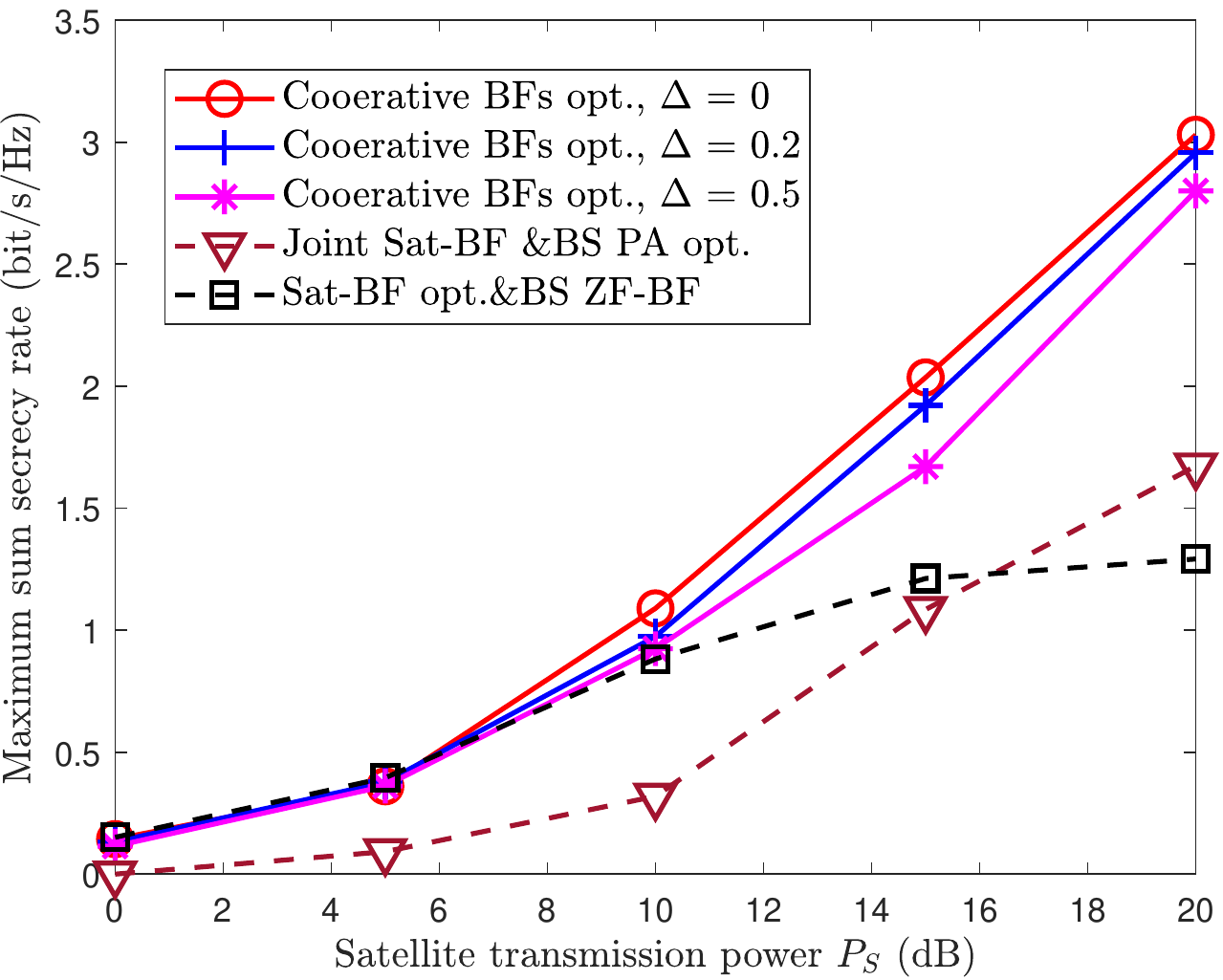}
	\caption{Secrecy rate performance of SU vs. satellite transmission power. (${P_B} = 30$ dB, $ Q = 0.5$ bit/s/Hz, $ M = 4 $.)}
	\label{SRvsPs}
\end{figure}

\begin{figure}[!h]
	\centering
	\includegraphics[width=0.47\textwidth]{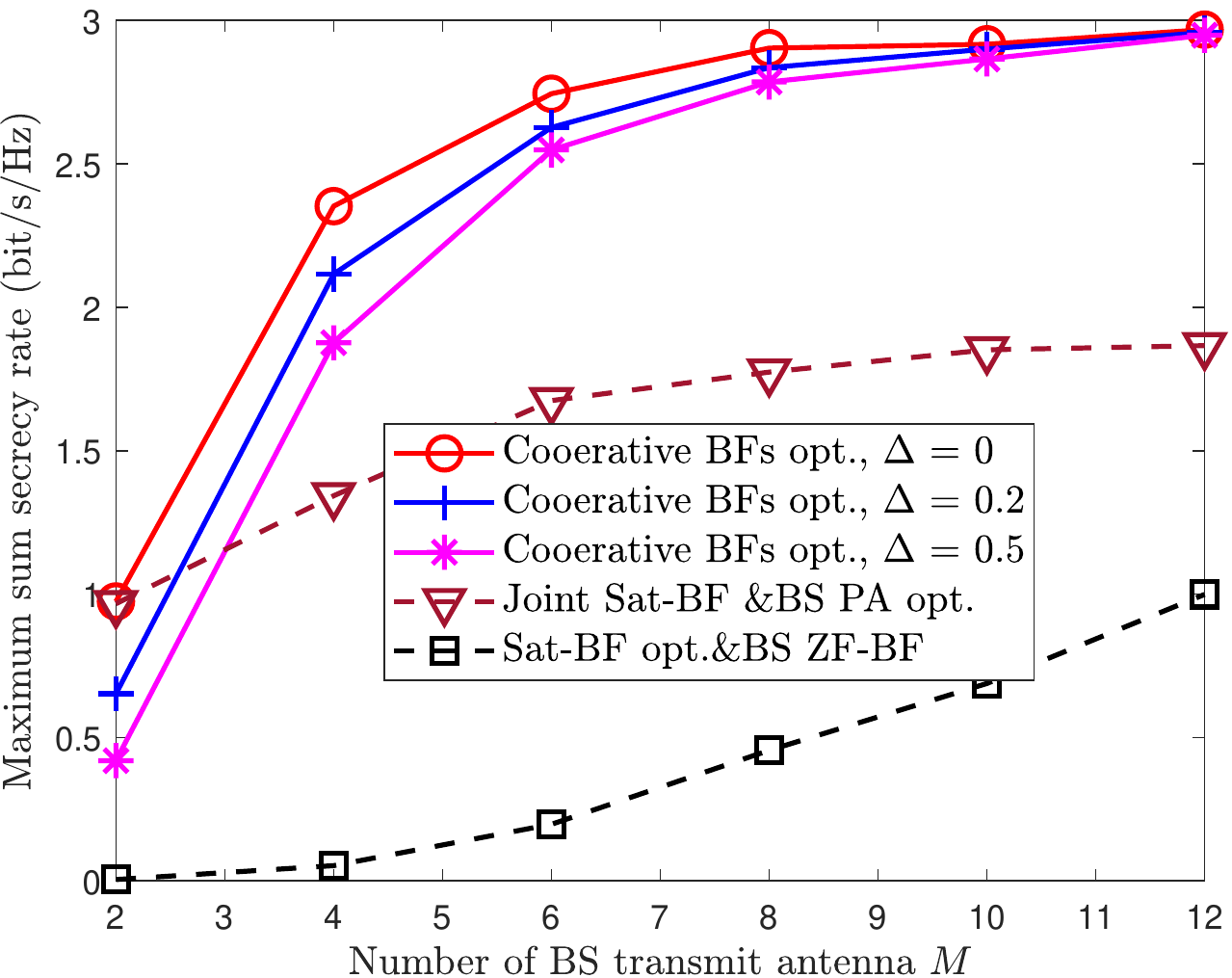}
	\caption{Secrecy rate performance of SU vs. the number of BS transmit antennas. (${P_B} = 30$ dB, ${P_S} = 20$ dB, $ Q = 0.5$ bit/s/Hz, $ M = 4 $.)}
	\label{SR_Antenna}
\end{figure}

\begin{figure}[!h]
	\centering
	\includegraphics[width=0.47\textwidth]{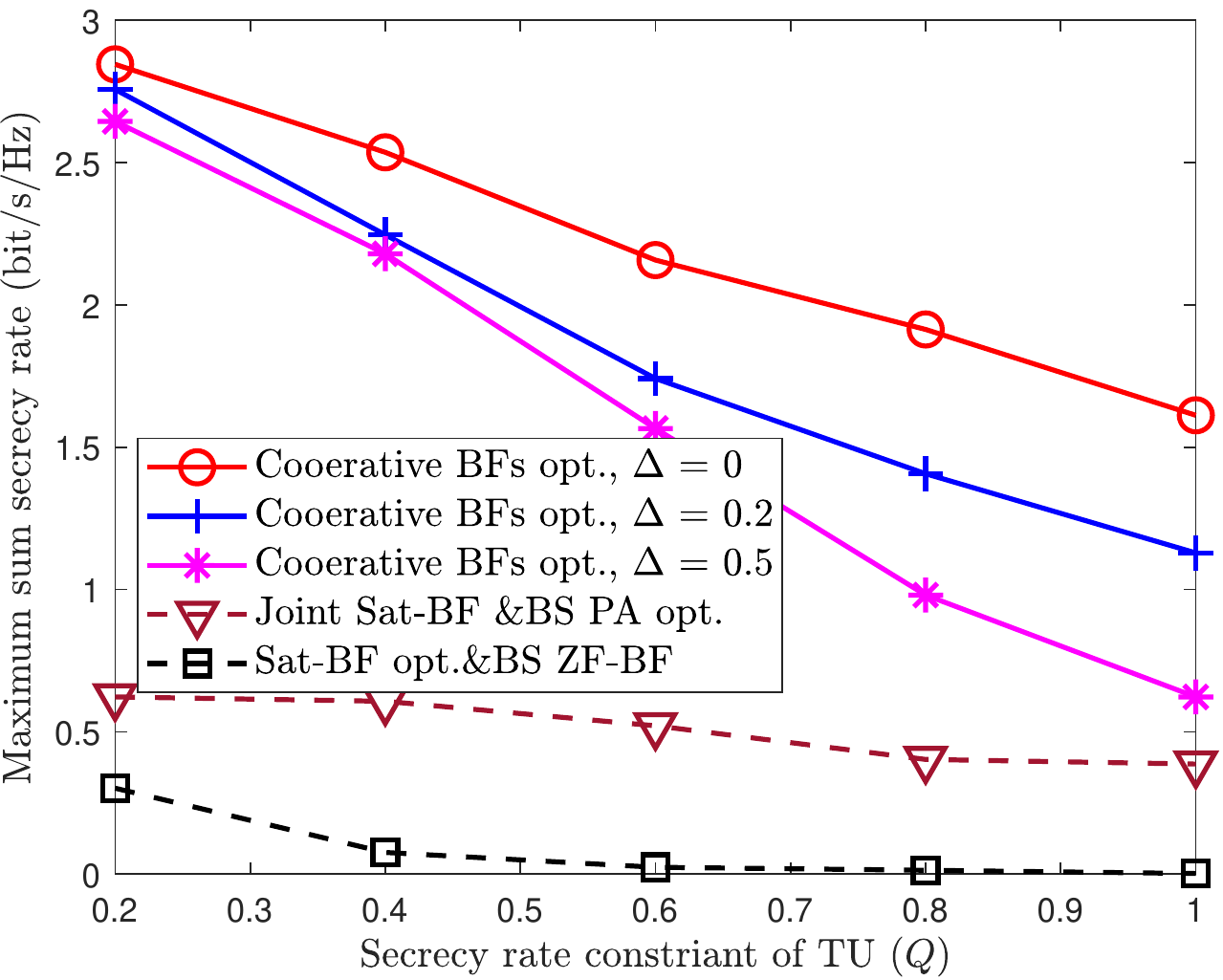}
	\caption{Secrecy rate performance of SU vs. secrecy rate constraint of TUs ($ Q $). (${P_S} = 20$ dB, ${P_B} = 20$ dB, $ M = 4 $.)}
	\label{QM}
\end{figure}

Fig. \ref{SR_Antenna} shows the impact of BS transmit antenna number on the maximum sum secrecy rate of SUs. From Fig. \ref{SR_Antenna}, it can be seen that the maximum sum secrecy rate of SUs is monotonically increasing of the number of BS transmit antennas. This is because the green interference from BS downlink can focus more centralized to degrade the eavesdropping of satellite link as the BS transmission power increases. Besides, it is observed that the impact of channel estimation error declines gradually as the number of BS transmit antenna increases, which happens since the antenna diversity could benefit from the channel estimation error. In addition, for the fixed BF schemes at the BS side, the signal or the AN can concentrate more on its direction as the number of BS transmit antenna increases. 

Fig. \ref{QM} shows the impact of secrecy rate constraint of TUs on the maximum sum secrecy rate of SU. From Fig. \ref{QM}, it can be seen that the maximum sum secrecy rate of SUs is monotonically decreasing of the secrecy rate constraint of TUs. 
This is because the BS transmission power focus more on the main channel of TU to guarantee the secrecy rate constraint of TUs as the constraint increases, while the green interference from the BS for damaging the Eve is degraded. 
Also, it can be seen our proposed cooperative BFs optimization approach outperforms the approach that jointly optimizes both satellite BF and the PA of BS and the approach that optimizes satellite BF with ZF-based BS BF, which verifies the efficiency of our proposed approach.    
%\section{appendix}
%
%\begin{figure}[!h]
%	\centering
%	\includegraphics[width=0.5\textwidth]{SRvsPs}
%	\caption{Secrecy rate performance of SU vs. BS transmission power. (${P_B} = 30$ dB, $ Q = 0.5$ bit/s/Hz, $ M = 4 $.)}
%	\label{SM}
%\end{figure}

\section{Conclusion}
In this paper, a symbiotic secure transmission scheme based on cooperative BF optimization in integrated satellite-terrestrial communications has been proposed. Particularly, the co-channel interference induced by spectrum sharing within satellite-terrestrial networks and the inter-beam interference due to frequency reuse among satellite beams serve as the green interference has been used to ensure secure transmissions of both satellite and terrestrial links simultaneously. Specifically, the problem to maximize the sum secrecy rate of SUs is formulated and the BFs optimizing is conducted cooperatively, where the secrecy rate constraint of each TU is guaranteed. Furthermore, the Taylor expansion and SDR have been adopted to reformulate this problem, and a SCA based joint satellite-terrestrial BFs optimization approach has been proposed to solve this problem. The tightness of the relaxation has also been proved. In addition, numerical results have verified the efficiency of our proposed cooperative BFs optimization approach and revealed that the inherent green interference from internal system without additional assistance can assist the implement of symbiotic secure transmissions in integrated satellite-terrestrial networks. 
\appendices
\section{Proof of Theorem 1}
\begin{proof}
	The Lagrangian function of ${\mathcal{P}2}$ can be obtained as shown at the bottom of this page.
		\setcounter{equation}{55} 
	\begin{figure*}[b]
	\hrulefill
	\begin{equation}\label{Eq-Lagrang}
		\begin{array}{*{20}{l}}
			{{\cal L}\left( {\left\{ {{\xi _k},{\varsigma _k},{\lambda _k},{\rho _k},{\tau _k},{\zeta _k},{\upsilon _k},{\vartheta _k},{\phi _k},{\beta _k},{\kappa _k},{{\bf{W}}_k},{{\bf{F}}_k},{{\bf{U}}_k},{{\bf{L}}_k}} \right\}} \right) = }\\
			{\sum\limits_{i = 1}^N {{\rm{Tr}}\left( {{{\bf{W}}_i} + {{\bf{F}}_i}} \right)}  - \sum\limits_{k = 1}^N {{\xi _k}\left( {{s_k} - {\mu _k} - {q_k} + {v_k} - {\varphi^\circ }} \right)}  + \sum\limits_{k = 1}^N {{\lambda _k}\left( {{e^{{s_k}}} - \sum\limits_{i = 1}^N {{\rm{Tr}}\left( {{{\bf{H}}_{su,i}}{{\bf{W}}_i}} \right)}  - {\rm{Tr}}\left( {{{\bf{G}}_{su,k}}{{\bf{F}}_k}} \right) - 1} \right)} }\\
			{ + \sum\limits_{k = 1}^N {{\rho _k}\left( {{e^{{v_k}}} - \sum\limits_{i \ne k}^N {{\rm{Tr}}\left( {{{\bf{H}}_{e,i}}{{\bf{W}}_i}} \right)}  - {\rm{Tr}}\left( {{{\bf{G}}_{e,k}}{{\bf{F}}_k}} \right) - 1} \right)}  + \sum\limits_{k = 1}^N {{\tau _k}\left( {\sum\limits_{i \ne k}^N {{\rm{Tr}}\left( {{{\bf{H}}_{su,i}}{{\bf{W}}_i}} \right)}  + {\rm{Tr}}\left( {{{\bf{G}}_{su,k}}{{\bf{F}}_k}} \right) - {e^{{{\tilde \mu }_k}}}\left( {{\mu _k} - {{\tilde \mu }_k}} \right)} \right)} }\\
			{ + \sum\limits_{k = 1}^N {{\zeta _k}\left( {\sum\limits_{i = 1}^N {{\rm{Tr}}\left( {{{\bf{H}}_{e,i}}{{\bf{W}}_i}} \right)}  + {\rm{Tr}}\left( {{{\bf{G}}_{e,k}}{{\bf{F}}_k}} \right) - {e^{{{\tilde q}_k}}}\left( {{q_k} - {{\tilde q}_k}} \right)} \right)}  + \sum\limits_{k = 1}^N {{\upsilon _k}\left( {\sum\limits_{i = 1}^N {{\rm{Tr}}\left( {{{\bf{H}}_{tu,i}}{{\bf{W}}_i}} \right)}  - {e^{{{\tilde \eta }_k}}}\left( {{\eta _k} - {{\tilde \eta }_k}} \right)} \right)} }\\
			{ + \sum\limits_{k = 1}^N {{\vartheta _k}\left( {{e^{{\tau _k}}} - \sum\limits_i^N {{\rm{Tr}}\left( {{{\bf{H}}_{tu,i}}{{\bf{W}}_i}} \right)} {\rm{ - Tr}}\left( {{{\bf{G}}_{tu,k}}{{\bf{F}}_k}} \right) - 1} \right)}  + \sum\limits_{k = 1}^N {{\phi _k}\left( {{\eta _k} + {q_k} - {\tau _k} - {\alpha _k} + \frac{{{Q_{tu,k}}}}{{{{\log }_2}e}}} \right) - \sum\limits_{k = 1}^N {{{\bf{U}}_k}{{\bf{W}}_k}} } }\\
			{ + \sum\limits_{k = 1}^N {{\beta _k}\left( {\sum\limits_{k = 1}^N {{\rm{Tr}}\left( {{{\bf{W}}_k}} \right)}  - {P_S}} \right)}  + \sum\limits_{k = 1}^N {{\varsigma _k}\left( {{e^{{\alpha _k}}} - \sum\limits_{i = 1}^N {{\rm{Tr}}\left( {{{\bf{H}}_{e,i}}{{\bf{W}}_i}} \right)}  - 1} \right) + \sum\limits_{k = 1}^N {{\kappa _k}\left( {{\rm{Tr}}\left( {{{\bf{F}}_k}} \right) - {P_B}} \right)}  - \sum\limits_{k = 1}^N {{{\bf{L}}_k}{{\bf{F}}_k}} } }.
		\end{array}
	\end{equation}
\end{figure*}
	
	Based on (\ref{Eq-Lagrang}), we take the partial derivative of $ \mathcal L\left(  \cdot  \right) $ with respect to $ {{{\bf{W}}_k}} $ and apply KKT conditions as follows
	\begin{equation}\label{KKT1}
		\begin{split}
			&{\bf{D}} - {{\bf{H}}_{su,k}}\sum\limits_{k = 1}^N {{\lambda _k}}  - {{\bf{U}}_k} = {\bf{0}},
		\end{split}
	\end{equation}
	\begin{equation}\label{KKT2}
		{{\bf{U}}_k}{{\bf{W}}_k} = {\bf{0}},
	\end{equation}
	\begin{equation}\label{KKT3}
		{{\bf{W}}_k} \succeq {\bf{0}},
	\end{equation}
	where 
	\begin{align}\label{Eq40}
		{\bf{D}} = &{{\bf{I}}_N}\big( {1 + \sum\limits_{i = 1}^N {{\beta _i}} } \big) + {{\bf{H}}_{e,k}}\big( {\sum\limits_{i = 1}^N {{\zeta _i}}  - \sum\limits_{i \ne k}^N {{\rho _i}}  - \sum\limits_{i = 1}^N {{\varsigma _i}} } \big) \nonumber\\
		&+{{\bf{H}}_{tu,k}}\big( {\sum\limits_{i = 1}^N {{\upsilon _i}}  - \sum\limits_{i = 1}^N {{\vartheta _i}} } \big) + {{\bf{H}}_{su,k}}\sum\limits_{i \ne k}^N {{\tau _i}} .
	\end{align}
	By using (\ref{KKT1}--\ref{KKT2}), we have the reformulation as follows
	\begin{equation}\label{Eq41}
		{\bf{D}}{{\bf{W}}_k} = {{\bf{H}}_{su,k}}{{\bf{W}}_k}\sum\limits_{k = 1}^N {{\lambda _k}}.
	\end{equation}
	
	From (\ref{Eq40}) and (\ref{Eq41}), we can find that 
	\begin{equation}\label{key}
		N - 2 \le rank\left( {\bf{D}} \right) \le N,
	\end{equation}
	%\begin{equation}\label{key}
	%rank\left( {{\bf{D}}{{\bf{W}}_k}} \right) \ge rank\left( {\bf{D}} \right) + rank\left( {{{\bf{W}}_k}} \right) - N
	%\end{equation}
	
	By (\ref{KKT1}), we have $ {{\bf{E}} = {\bf{D}}} $ by denoting $ {\bf{E}} = {{\bf{H}}_{su,k}}\sum\limits_{k = 1}^N {{\lambda _k}}  + {{\bf{U}}_k} $, and thus 
	$ rank\left( {\bf{E}} \right) \ge N - 2 $. With (\ref{KKT2}), we have 
	\begin{align}\label{key}
		rank\left( {{\bf{E}}{{\bf{W}}_k}} \right) &= 
		rank\big( {{{\bf{H}}_{su,k}}{{\bf{W}}_k}\sum\limits_{k = 1}^N {{\lambda _k}} } \big) \nonumber\\
		&\le rank\left( {{{\bf{H}}_{su,k}}} \right) = 1
	\end{align}
	and 
%	\begin{equation}\label{Eq44}
$ 		rank\left( {\bf{E}} \right) + rank\left( {{{\bf{W}}_k}} \right) \le rank\left( {{\bf{E}}{{\bf{W}}_k}} \right) + N. $
%	\end{equation}

	If $ rank\left( {{\bf{E}}{{\bf{W}}_k}} \right) =1$, we have $ rank\left( {{{\bf{W}}_k}} \right) = N $ and $ rank\left( {\bf{E}} \right) \le 1 $, which conflicts with the precondition $ rank\left( {\bf{E}} \right) \ge N - 2 $. Consequently, the provision of $ rank\left( {{\bf{E}}{{\bf{W}}_k}} \right) =0$ should be satisfied. Then, $ rank\left( {{{\bf{W}}_k}} \right) \le 2 $ is achieved.
	
	Further, we reformulated (\ref{KKT1}) as 
	\begin{equation}\label{Eq45}
		{\bf{\Theta }} - {\bf{E}} - {{\bf{H}}_{e,k}}\big( {\sum\limits_{i \ne k}^N {{\rho _i}}  + \sum\limits_{i = 1}^N {{\varsigma _i}} } \big) - \sum\limits_{i = 1}^N {{\vartheta _i}{{\bf{H}}_{tu,k}}}  = {\bf{0}},
	\end{equation}
	where $ {\bf{\Theta }} = {{\bf{I}}_N}\big( {1 + \sum\limits_{i = 1}^N {{\beta _i}} } \big) + {{\bf{H}}_{e,k}}\sum\limits_{i = 1}^N {{\zeta _i}}  + \sum\limits_{i \ne k}^N {{\tau _i}{{\bf{H}}_{su,k}}}  + \sum\limits_{i = 1}^N {{\upsilon _i}{{\bf{H}}_{tu,k}}}  $, and it can be  observed that $ {\bf{\Theta }} \succ {\bf{0}} $ due to $ {{\beta _i}} \geq0,{{\zeta _i}}\geq0, {\tau _i}\geq0$, and ${\upsilon _i}\geq0$.
	
	By post-multiplying $ {{{\bf{W}}_k}} $ at both sides of (\ref{Eq45}) and using $ rank\left( {{\bf{E}}{{\bf{W}}_k}} \right) =0$, (\ref{Eq45}) can be rewritten as	
	\begin{equation}\label{Eq46}
		{\bf{\Theta }}{{\bf{W}}_k} = \big( {{\bf{E}} + {{\bf{H}}_{e,k}}(\sum\limits_{i \ne k}^N {{\rho _i}}  + \sum\limits_{i = 1}^N {{\varsigma _i}} ) + \sum\limits_{i = 1}^N {{\vartheta _i}{{\bf{H}}_{tu,k}}} } \big){{\bf{W}}_k}.
	\end{equation}

	In (\ref{Eq46}), since $ {\bf{\Theta }} \succ {\bf{0}} $, we have 
	\begin{equation}\label{Eq47}
		rank\left( {{\bf{\Theta }}{{\bf{W}}_k}} \right) = rank\left( {{{\bf{W}}_k}} \right),
	\end{equation}
	and thus (\ref{Eq-68}) is achieved.
%	$ 	rank\big( {{{\bf{W}}_k}} \big) 
%		= rank\big( {\big( {{{\bf{H}}_{e,k}}(\sum\limits_{i \ne k}^N {{\rho _i}}  + \sum\limits_{i = 1}^N {{\varsigma _i}} ) + \sum\limits_{i = 1}^N {{\vartheta _i}{{\bf{H}}_{tu,k}}} } \big){{\bf{W}}_k}} \big)
%		\le rank\big( {{{\bf{H}}_{e,k}}{{\bf{W}}_k}(\sum\limits_{i \ne k}^N {{\rho _i}}  + \sum\limits_{i = 1}^N {{\varsigma _i}} )} \big) + rank\big( {{{\bf{H}}_{tu,k}}{{\bf{W}}_k}\sum\limits_{i = 1}^N {{\vartheta _i}} } \big). $
	\begin{align}\label{Eq-68}
	&rank\big( {{{\bf{W}}_k}} \big) = rank\big( {\big( {{{\bf{H}}_{e,k}}(\sum\limits_{i \ne k}^N {{\rho _i}}  + \sum\limits_{i = 1}^N {{\varsigma _i}} ) + \sum\limits_{i = 1}^N {{\vartheta _i}{{\bf{H}}_{tu,k}}} } \big){{\bf{W}}_k}} \big) \nonumber\\
	&\le rank\big( {{{\bf{H}}_{e,k}}{{\bf{W}}_k}(\sum\limits_{i \ne k}^N {{\rho _i}}  + \sum\limits_{i = 1}^N {{\varsigma _i}} )} \big) + rank\big( {{{\bf{H}}_{tu,k}}{{\bf{W}}_k}\sum\limits_{i = 1}^N {{\vartheta _i}} } \big).
\end{align}

%\begin{figure*}
%	\begin{small}
%	\begin{equation}\label{Eq-68}
%			rank\big( {{{\bf{W}}_k}} \big) 
%			= rank\big( {\big( {{{\bf{H}}_{e,k}}(\sum\limits_{i \ne k}^N {{\rho _i}}  + \sum\limits_{i = 1}^N {{\varsigma _i}} ) + \sum\limits_{i = 1}^N {{\vartheta _i}{{\bf{H}}_{tu,k}}} } \big){{\bf{W}}_k}} \big)
%		\le rank\big( {{{\bf{H}}_{e,k}}{{\bf{W}}_k}(\sum\limits_{i \ne k}^N {{\rho _i}}  + \sum\limits_{i = 1}^N {{\varsigma _i}} )} \big) + rank\big( {{{\bf{H}}_{tu,k}}{{\bf{W}}_k}\sum\limits_{i = 1}^N {{\vartheta _i}} } \big).
%	\end{equation}
%\end{small}
%\hrulefill
%\end{figure*}
	Particularly, 
	\begin{align}\label{Eq49}
			rank\big( {{{\bf{H}}_{e,k}}{{\bf{W}}_k}(\sum\limits_{i \ne k}^N {{\rho _i}}  + \sum\limits_{i = 1}^N {{\varsigma _i}} )} \big) \le rank\left( {{{\bf{H}}_{e,k}}} \right) = 1,\\
			rank\big( {{{\bf{H}}_{tu,k}}{{\bf{W}}_k}\sum\limits_{i = 1}^N {{\vartheta _i}} } \big) \le rank\left( {{{\bf{H}}_{tu,k}}} \right) = 1.
	\end{align}	
	However, if $ rank\left( {{{\bf{W}}_k}} \right) = 2 $, it can be observed from (\ref{Eq49}) that $ rank\big( {{{\bf{H}}_{e,k}}{{\bf{W}}_k}(\sum\limits_{i \ne k}^N {{\rho _i}}  + \sum\limits_{i = 1}^N {{\varsigma _i}} )} \big) =1 $ and $ rank\big( {{{\bf{H}}_{tu,k}}{{\bf{W}}_k}\sum\limits_{i = 1}^N {{\vartheta _i}} } \big) =1$, which indicates an incompatible result that $ rank\big( {{{\bf{W}}_k}} \big) = N $.
	
	Therefore, $ rank\left( {{{\bf{W}}_k}} \right) \le 1 $ is kept, and $ rank\left( {{{\bf{W}}_k}} \right) = 0  $ can not be a solution and which should be discarded. Finally, $ rank\left( {{{\bf{W}}_k}} \right) = 1 $ is proved.
	
	Similarly, for the proof of $ Rank\left( {{{\bf{F}}_k}} \right) = 1 $, we take the partial derivative of $ \mathcal L\left(  \cdot  \right) $ with respect to $ {{{\bf{F}}_k}} $ and apply KKT conditions as follows
	\begin{equation}\label{Eq-50}
		{\bf{A}}- {\vartheta _k}{{\bf{G}}_{tu,k}} - {{\bf{L}}_k} = {\bf{0}},
	\end{equation}
	\begin{equation}\label{Eq-51}
		{{\bf{L}}_k}{{\bf{F}}_k} = {\bf{0}},
	\end{equation}
	\begin{equation}\label{Eq-52}
		{{\bf{L}}_k} \succeq {\bf{0}},
	\end{equation}
	where
	\begin{equation}
		{\bf{A}} = (1 + \sum\limits_{k = 1}^N {{\kappa _k}} ){{\bf{I}}_N} + \left( {{\tau _k} - {\lambda _k}} \right){{\bf{G}}_{su,k}} + \left( {{\zeta _k} - {\rho _k}} \right){{\bf{G}}_{e,k}},
	\end{equation}
	and it is observed that $ N - 2 \le rank\left( {\bf{A}} \right) \le N $. Due to 
	$ {{\bf{G}}_{tu,k}} = {\bf{h}}_{tu,k}^\dag {\bf{h}}_{tu,k}^{} $, $ rank\left( {{{\bf{G}}_{tu,k}}} \right) = 1 $. Thus, similar to the way of proof for $ Rank\left( {{{\bf{W}}_k}} \right) = 1 $ above, $ Rank\left( {{{\bf{F}}_k}} \right) = 1 $ can also be proved.  
\end{proof}
\bibliographystyle{IEEEtran}
\bibliography{IEEEabrv,zsyin_Sub1}
\begin{IEEEbiography}[{\includegraphics[width=1in,height=1.25in,clip,keepaspectratio]{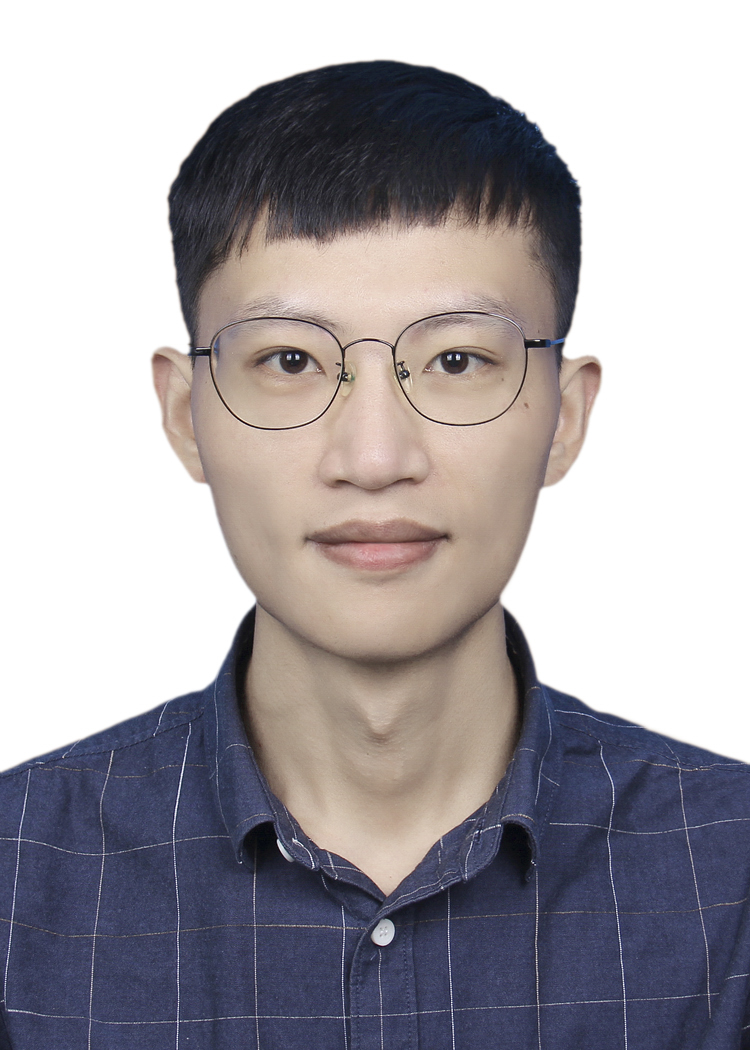}}]{Zhisheng Yin}
	(M'20) received his Ph.D. degree from the School of Electronics and Information Engineering, Harbin Institute of Technology, Harbin, China, in 2020, and the B.E. degree from the Wuhan Institute of Technology, the B.B.A. degree from the Zhongnan University of Economics and Law, Wuhan, China, in 2012, and the M.Sc. degree from the Civil Aviation University of China, Tianjin, China, in 2016. From Sept. 2018 to Sept. 2019, Dr. Yin visited in
	BBCR Group, Department of Electrical and Computer Engineering, University
	of Waterloo, Canada. He is currently an Associate Professor with School of Cyber Engineering, Xidian University, Xi'an, China. His research interests include space-air-ground integrated networks, wireless communications, digital twin, and physical layer security.
\end{IEEEbiography}
\begin{IEEEbiography}[{\includegraphics[width=1in,height=1.25in,clip,keepaspectratio]{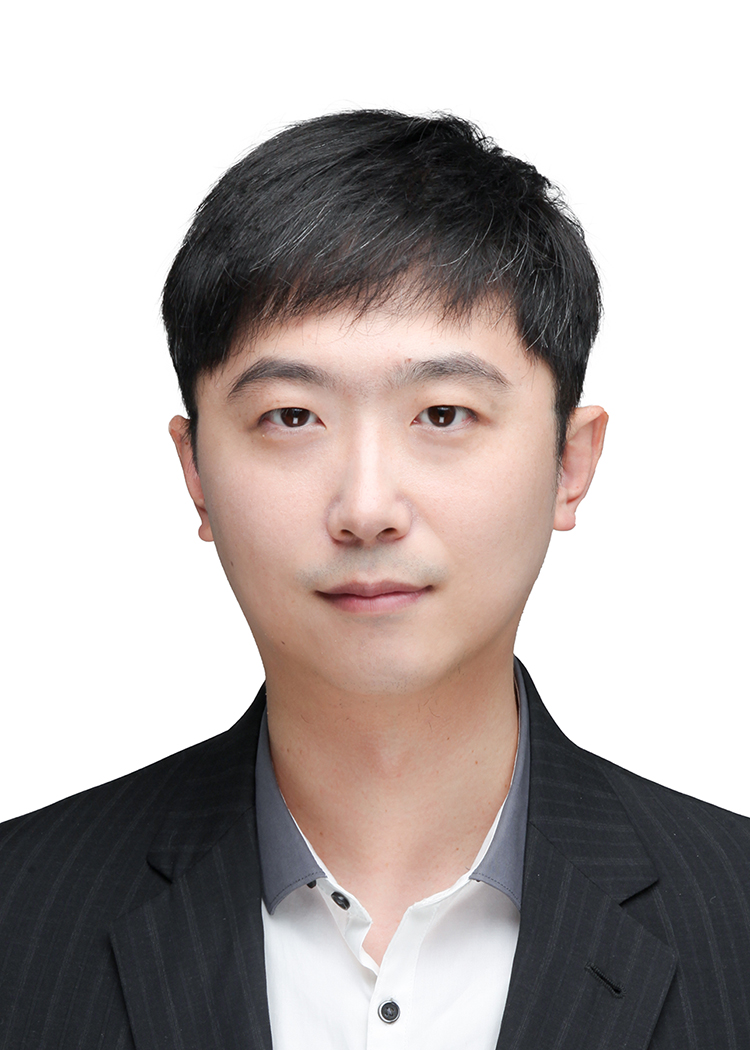}}]{Nan Cheng}
	(M'16) received the Ph.D. degree from the Department of Electrical and Computer Engineering, University of Waterloo in 2016, and B.E. degree and the M.S. degree from the Department of Electronics and Information Engineering, Tongji University, Shanghai, China, in 2009 and 2012, respectively. He worked as a Post-doctoral fellow with the Department of Electrical and Computer Engineering, University of Toronto, from 2017 to 2019. He is currently a professor with State Key Lab. of ISN and with School of Telecommunications Engineering, Xidian University, Shaanxi, China.  His current research focuses on B5G/6G, space-air-ground integrated network, big data in vehicular networks, and self-driving system. His research interests also include performance analysis, MAC, opportunistic communication, and application of AI for vehicular networks.
\end{IEEEbiography}
\begin{IEEEbiography}[{\includegraphics[width=1in,height=1.25in,clip,keepaspectratio]{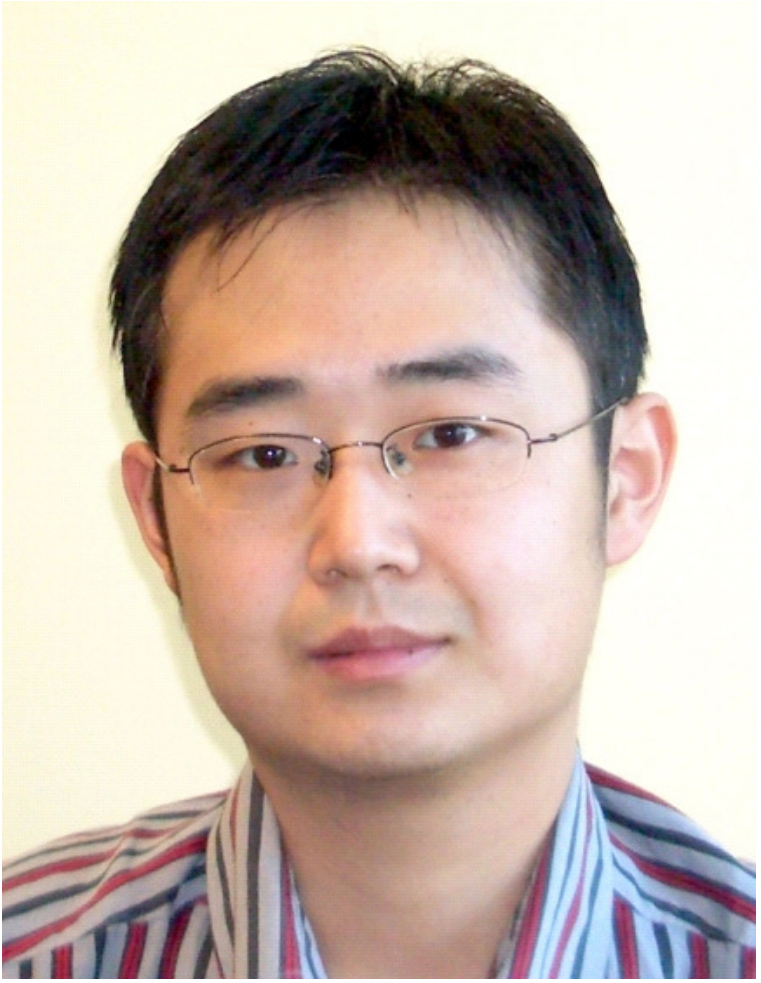}}]{Tom H. Luan}
	(M'14, SM'17) received the B.Eng. degree from Xi’an Jiaotong University, Xi’an, China, in 2004, the M.Phil. degree from The Hong Kong University of Science and Technology, Hong Kong, in 2007, and the Ph.D. degree from the University of Waterloo, Waterloo, ON, Canada, in 2012. He is currently a Professor with the School of Cyber Engineering, Xidian University, Xi’an. He has authored or coauthored more than 40 journal articles and 30 technical articles in conference proceedings. His research interests include content distribution and media streaming in vehicular ad hoc networks, peer-to-peer networking, and the protocol design and performance evaluation of wireless cloud computing and edge computing. Dr. Luan was the recipient of one U.S. patent. He was a TPC Member of the IEEE Global Communications Conference, the IEEE International Conference on Communications, and the IEEE International Symposium on Personal, Indoor and Mobile Radio Communications, and the Technical Reviewer for multiple IEEE Transactions, including the IEEE Transactions on Mobile Computing, the IEEE Transactions on Parallel and Distributed Systems, the IEEE Transactions on Vehicular Technology, the IEEE Transactions on Wireless Communications, and the IEEE Transactions on Intelligent Transportation Systems.
\end{IEEEbiography}
\begin{IEEEbiography}[{\includegraphics[width=1in,height=1.25in,clip,keepaspectratio]{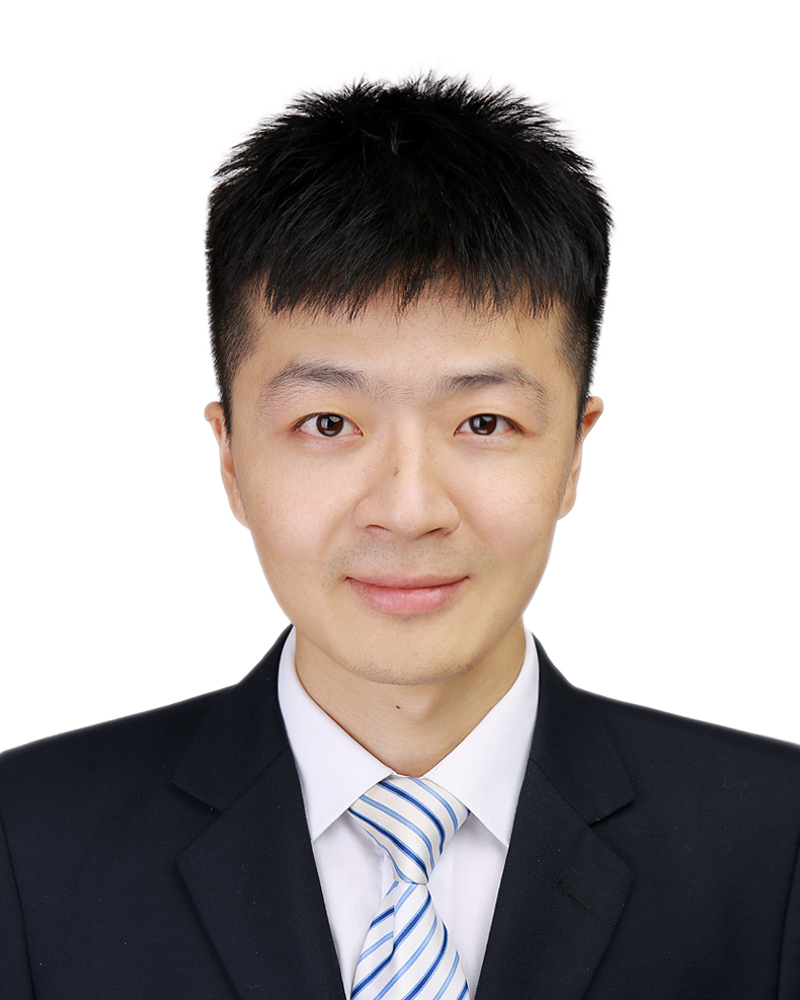}}]{Yilong Hui}
	(M'18) received the Ph.D. degree in control
	theory and control engineering from Shanghai University,
	Shanghai, China, in 2018. He is currently
	an Associate Professor with the State Key Laboratory of Integrated
	Services Networks, and with the School of
	Telecommunications Engineering, Xidian University,
	China. He has published over 50 scientific articles
	in leading journals and international conferences.
	His research interests include wireless
	communication, mobile edge computing, vehicular networks, intelligent transportation systems and autonomous driving. He was the recipient of the Best Paper Award of International Conference WiCon2016 and IEEE
	Cyber-SciTech2017.
\end{IEEEbiography}
\begin{IEEEbiography}[{\includegraphics[width=1in,height=1.25in,clip,keepaspectratio]{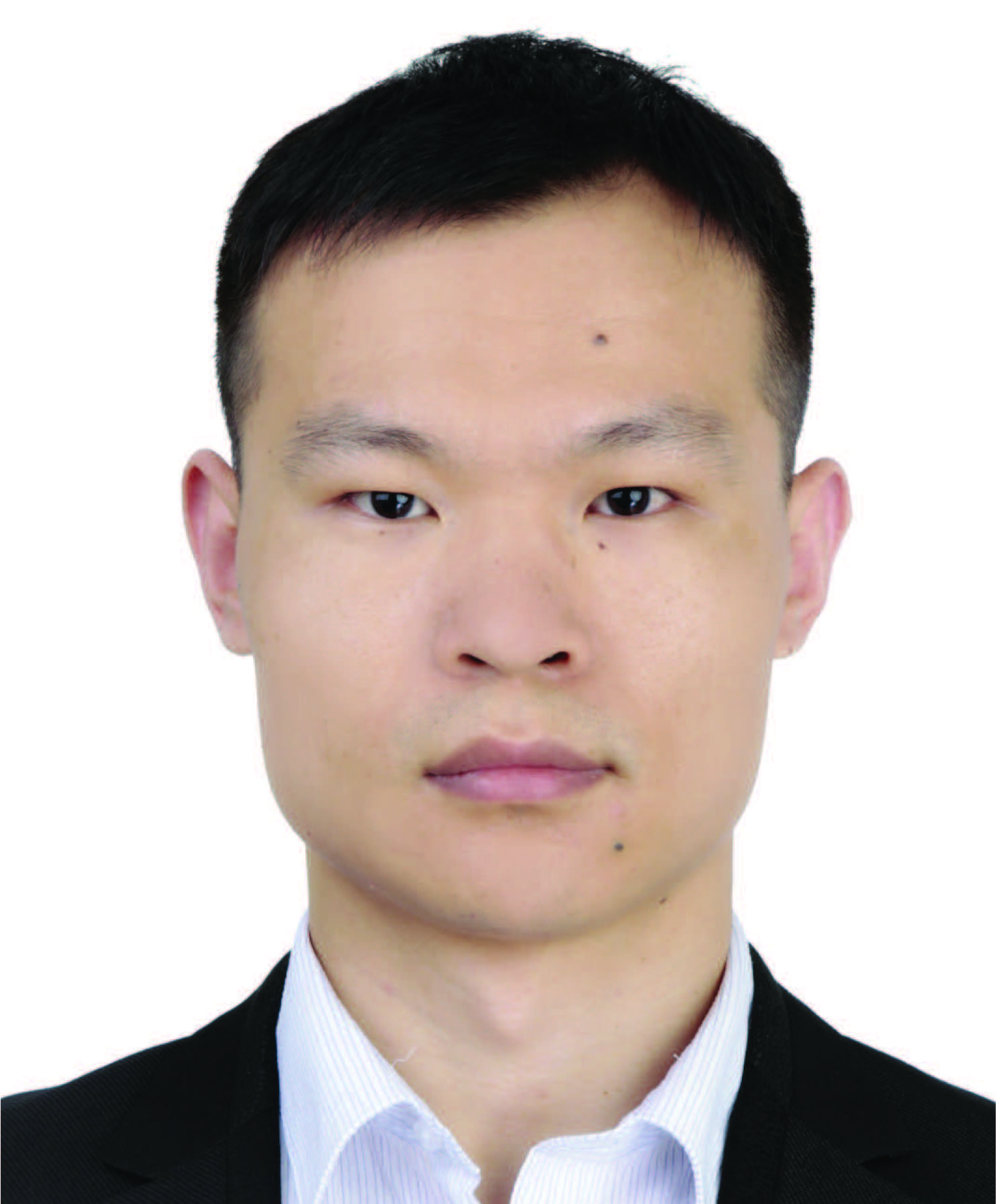}}]{Wei Wang}
	(M'19) received the B.Eng. degree in Information Countermeasure Technology and the M.Eng. degree in Signal and Information Processing from Xidian University in 2011 and 2014, respectively, and the PhD degree in Electrical and Electronic Engineering from Nanyang Technological University (NTU), Singapore, in 2018. From Sep. 2018 to Aug. 2019, he was a postdoctoral fellow at the Department of Electrical and Computer Engineering, University of Waterloo, Canada.  Currently, he is a Professor at Nanjing University of Aeronautics and Astronautics. His research interests include wireless communications, space-air-ground integrated networks, wireless security, and blockchain. He was awarded the Chinese government award for outstanding self-financed students abroad in 2018, and the Young Elite Scientist Sponsorship Program, China Association for Science and Technology in 2021.
\end{IEEEbiography}
\end{document}